\tikzstyle{overbrace text style}=[font=\tiny, above, pos=.5, yshift=5pt]
\tikzstyle{overbrace style}=[decorate,decoration={brace,raise=5pt,amplitude=3pt}]
\newtheorem{theorem}{Theorem}[section]
\newtheorem{corollary}[theorem]{Corollary}
\theoremstyle{definition}
\newtheorem*{comment*}{Comment}
\newcommand{\SW}{\text{SW}}
\newcommand{\bw}{\mathbf{w}}
\newcommand{\bo}{\mathbf{o}}
\newcommand{\bx}{\mathbf{x}}
\newcommand{\bp}{\mathbf{p}}
\newcommand{\balpha}{\boldsymbol{\alpha}}
\newcommand{\opt}{\text{OPT}}
\newcommand{\mech}{\text{MECH}}
\title{\bf Truthful Facility Location with Candidate Locations and Limited Resources}
\author{Panagiotis Kanellopoulos and Alexandros A. Voudouris}
\date{School of Computer Science and Electronic Engineering \\ University of Essex, UK}
\begin{document}

\allowdisplaybreaks

\maketitle

\begin{abstract}
We study a truthful facility location problem where one out of $k\geq2$ available facilities must be built at a location chosen from a set of candidate ones in the interval $[0,1]$. This decision aims to accommodate a set of agents with private positions in $[0,1]$ and approval preferences over the facilities; the agents act strategically and may misreport their private information to maximize their utility, which depends on the chosen facility and their distance from it. We focus on strategyproof mechanisms that incentivize the agents to act truthfully and bound the best possible approximation of the optimal social welfare (the total utility of the agents) they can achieve. We first show that deterministic mechanisms have unbounded approximation ratio, and then present a randomized mechanism with approximation ratio $k$, which is tight even when agents may only misreport their positions. For the restricted setting where agents may only misreport their approval preferences, we design a deterministic mechanism with approximation ratio of roughly $2.325$, and establish lower bounds of $3/2$ and $6/5$ for deterministic and randomized mechanisms, respectively. 
\end{abstract}

%%%%%%%%%%%%%%%%%%%%%%
%%%%%%%%%%%%%%%%%%%%%%
\section{Introduction} \label{sec:intro}
%%%%%%%%%%%%%%%%%%%%%%
%%%%%%%%%%%%%%%%%%%%%%
Facility location problems, where the objective is to decide where in a metric space to build a number of facilities to accommodate the needs of a set of individuals, form a paradigmatic class of challenges that has attracted interest across multiple fields, including Theoretical Computer Science, Artificial Intelligence, Multi-Agent Systems, and Operations Research. In many real-world settings, the input to such problems is provided by self-interested agents who have private information (e.g., their locations) and may misreport it if doing so leads to outcomes that benefit them. This has led to significant interest in the design of \emph{strategyproof} facility location mechanisms that incentivize agents to report their actual private information. The framework of \emph{approximate mechanism design without money} \citep{procaccia09approximate} has played a key role in this area, giving rise to a rich collection of models that capture practical applications. 

In the classic setting studied by \citet{procaccia09approximate}, the agents are positioned on the real line and a set of homogeneous  facilities (that offer the same type of service) can all be built without restrictions anywhere on the line. In many practical settings, however, there may be different types of facilities which cannot all be built due to budgetary and planning constraints. Consider, for instance, that a city has sufficient funds to build only one hospital and must decide an appropriate specialization (e.g., pediatrics, trauma, geriatrics) at a location among a set of prespecified ones. City residents report their home addresses and preferred hospital type, but may strategically misreport this private information to increase the likelihood of an approved specialization being built at a nearby location. The central question then becomes: how can the city make an (approximately) optimal decision while ensuring truthfulness?

Motivated by applications like the hospital one described above, \citet{deligkas2023limited} introduced a model with strategic agents that are positioned in the interval $[0,1]$ and have {\em approval} (also known as optional) {\em preferences} over a set of \emph{heterogeneous} facilities that offer different types of services. Given information reported by the agents about their positions and preferences, one of the available facilities is chosen to be built at some location in the interval $[0,1]$, and the agents derive a utility that is a function of the chosen facility as well as its location. In this work, we consider an extension of this model in which the chosen facility cannot be built anywhere in $[0,1]$ but only at a location chosen from a predetermined set of \emph{candidate} ones; such candidate locations naturally capture planning (or zoning) constraints that limit the set of available locations. This model elegantly combines elements of traditional facility location, as agents care about proximity to the facility, with voting theory, as agents care about which facility is chosen. 

\subsection{Our Contribution}
To be more specific, in our model, there are $k\geq 2$ facilities, and a set of $n$ agents with private positions in the interval $[0,1]$ and approval preferences ($0$ or $1$) for the facilities. Given input from the agents about their private information, the goal is to compute a solution consisting of one of the facilities and a location chosen from a set of candidate ones where the facility will be built. The agents act strategically aiming to maximize their utility, which is either $0$ in case they do not approve the chosen facility, or depends on their distance from the facility location if they do approve it. Our goal is to design mechanisms that incentivize the agents to truthfully report their private information, and approximate the optimal social welfare (the total utility of all agents). 

We start with the general setting in Section~\ref{sec:general}, where the agents may misreport their positions and their preferences. Using a simple construction, we show that no deterministic strategyproof mechanism can achieve a bounded approximation ratio. This is in sharp contrast to the approximation ratio of $2$ that can be achieved by deterministic mechanisms in the continuous model of \citet{deligkas2023limited}. To overcome this impossibility, we then turn to randomization, and show that it is possible to achieve a tight bound of $k$ via a mechanism which defines different probability distributions depending on which subinterval of $[0,1]$ contains the candidate locations. All these results for the general setting are also true even when the preferences of the agents are assumed to be known and thus the agents can only misreport their positions. 

In Section~\ref{sec:positions}, we consider a restricted setting, where the positions of the agents are known and thus the agents may only misreport their preferences. We show that a small constant approximation ratio of roughly $2.325$ can be achieved via a deterministic strategyproof mechanism. As in the case of randomization in the general setting, this mechanism also distinguishes between different cases depending on how the candidate locations are distributed in the interval $[0,1]$, and makes decisions taking advantage of the fact that the positions of the agents cannot be manipulated; in particular, once we fix a candidate location, we can then simply choose the optimal facility for this location without providing incentives to the agents to misreport.

\subsection{Related Work}
The framework of approximate mechanism design without money, introduced by \citet{procaccia09approximate}, initiated a broad line of research on truthful facility location, expanding on classical characterizations such as that of \citet{Moulin1980}; see \citet{fl-survey} for a comprehensive overview of work in this area. We focus here on works that are most relevant to us on the following axes: heterogeneous facilities, approval preferences, and location constraints. 

As already mentioned above, the work closest to ours is that of \citet{deligkas2023limited} whose model allowed the chosen facility to be built at any point in the interval $[0,1]$. \citeauthor{deligkas2023limited} showed bounds mainly for $k=2$ and three settings, depending on whether the agents can misreport their positions, their approval preferences, or both. In all cases, their results are small constant bounds on the approximation ratio of deterministic and randomized mechanisms. In follow-up work, \citet{FangL24} modified the model of \citet{deligkas2023limited} by assigning to each agent an approval weight of $1$ which can be distributed to the facilities, leading to fractional preferences. They also study deterministic and randomized mechanisms, and present small constant bounds for the general, the known-positions and the known-preferences settings. 

In a similar spirit, \citet{Elkind2022approval} studied a multiwinner facility location problem on the line, where $k$ out of $m$ facilities are to be built. As in our case, their model assumes the existence of candidate locations and approval preferences. However, in contrast to us, the preferences of the agents depend on the distance from facility locations. In particular, each agent is associated with a radius per facility, and approves a facility only if it is chosen and built within the corresponding radius. Among other results, they present fairness-related axioms that are motivated by multiwinner voting; similar connections of facility location and voting have also been highlighted by \citet{feldman2016voting} and in various works within the distortion literature~\citep{distortion-survey}.

Our work is also related to a long list of papers which study models with multiple facilities and sufficient funds to build all of them. Similarly to us, most of these papers consider agents that have approval preferences over the facilities\footnote{Other types of preferences have also been studied, such as hybrid~\citep{feigenbaum2015hybrid,anastasiadis2018heterogeneous} and fractional~\citep{fong2018fractional}.} and are positioned on a line (either that of real numbers or a graph). Some of them allow the facilities to be built at any point of the underlying line~\citep{chen2020max,li2020constant,Sha_Bao_Chan_Chau_Fong_Li_2025,serafino2016, kanellopoulos2023discrete}, while others, as in our case, focus on candidate locations~\citep{Tang2020candidate,Zhao2023constrained,lotfi2023max,ZhaoLNF24,kanellopoulos2024}. Models involving minimum distance between the facilities or dynamic locations, instead of explicit candidate locations, have also been proposed~\citep{Xu2021minimum,Duan2021minimum,DLV24}.

%%%%%%%%%%%%%%%%%%%%%%
%%%%%%%%%%%%%%%%%%%%%%
\section{The Model} \label{sec:prelim}
%%%%%%%%%%%%%%%%%%%%%%
%%%%%%%%%%%%%%%%%%%%%%
There is a set $N$ of $n \geq 2$ {\em agents}, and a set of $k \geq 2$ {\em facilities} called $F_1, \ldots, F_k$. Every agent $i \in N$ has a {\em position} $x_i \in [0,1]$ and an {\em approval preference} $\alpha_{i,j} \in \{0,1\}$ for each facility $F_j$ such that $\alpha_{i,j} = 1$ indicates that $i$ approves $F_j$ and $\alpha_{i,j}=0$ indicates that $i$ does not approve $F_j$. For any $j \in [k]$, let $N_j$ be the set of agents that approve facility $F_j$, and $n_j = |N_j|$; note that agents might belong to multiple such sets in case they approve multiple facilities. 

The objective is to choose one of the facilities and build it at a location chosen from a set $C$ of candidate locations in $[0,1]$. In particular, a feasible {\em solution} is a pair $(j,x)$ with $j \in [k]$ and $x \in C$. 
A randomized solution $\bp$ is a probability distribution that assigns a probability $p_{j,x}$ to any solution $(j,x)$; note that any deterministic solution $(j,x)$ is a randomized solution such that $p_{j,x} = 1$. 
Given $\bp$, each agent $i$ derives an (expected) {\em utility} equal to 
\begin{align*}
    u_i(\bp) = \sum_{j\in [k]} \sum_{x \in C} p_{j,x} \cdot \alpha_{i,j} \cdot \bigg( 1 - d(i,x) \bigg),
\end{align*}
where $d(i,x) = |x_i-x|$ is the distance between the position $x_i$ of agent $i$ and $x$. The (expected) {\em social welfare} of a randomized solution $\bp$ is the total utility of the agents for it: 
\begin{align*}
    \SW(\bp) = \sum_i u_i(\bp).
\end{align*}

A {\em mechanism} $M$ takes as input an instance $I = (\bx,\balpha,C)$ consisting of the reported positions $\bx$ and preferences $\balpha$ of the agents, as well as the candidate locations $C$, and outputs a probability distribution $\bp_M(I)$ over feasible solutions. Our goal is to design mechanisms that are strategyproof and achieve a good approximation of the maximum possible social welfare. A mechanism is said to be {\em strategyproof} (in expectation) if no agent can misreport its private information and lead to an outcome that increases its expected utility. 
In particular, a mechanism is strategyproof if, for any two instances $I = (\bx,\balpha,C)$ and $I' = ((x_i',\bx_{-i}),(\alpha_i',\balpha_{-i}),C)$ which differ only on the private information (position or preferences) reported by a single agent $i$, $u_i(\bp_M(I)) \geq u_i(\bp_M(I'))$.

The {\em approximation ratio} of a mechanism $M$ is defined as the worst-case (over all possible instances) ratio between the maximum possible social welfare achieved over any feasible solution and the social welfare achieved by the solution computed by the mechanism, that is, 
\begin{align*}
    \sup_{I} \frac{\max_{(j,x)} \SW(j,x)}{\SW(\bp_M(I))}.
\end{align*}
For simplicity, when an instance is clear from context, we will write $\opt$ to refer to the optimal social welfare and $\mech$ to refer to the social welfare of the mechanism. Then, the approximation ratio is simply the ratio $\opt/\mech$.

%%%%%%%%%%%%%%%%%%%%%%
%%%%%%%%%%%%%%%%%%%%%%
\section{General Setting} \label{sec:general}
%%%%%%%%%%%%%%%%%%%%%%
%%%%%%%%%%%%%%%%%%%%%%
In this section we consider the general setting where the agents can in principle misreport about their positions and preferences. We start by showing a strong impossibility result for deterministic strategyproof mechanisms, which holds even when the preferences of the agents are assumed to be known.  

\begin{theorem}\label{thm:general:lower:deterministic}
For any $k \geq 2$, the approximation ratio of any deterministic strategyproof mechanism is unbounded, even when the preferences of the agents are known.
\end{theorem}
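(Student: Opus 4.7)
My plan is to reduce to a minimal instance family: $n=2$ agents with known preferences where agent~$1$ approves only $F_1$ and agent~$2$ approves only $F_2$, a single candidate location $C=\{0\}$, and any $k \geq 2$. The facilities $F_3,\ldots,F_k$ play no role since no agent approves them; selecting any such facility yields welfare~$0$, which handles one trivial sub-case at the end.

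The key step is the following strategyproofness observation for this family: for any fixed $x_2$, if $M(x_1',x_2) = (F_1,0)$ for some report $x_1'$, then $M(x_1,x_2) = (F_1,0)$ for all $x_1 \in [0,1)$. The reason is that because $|C|=1$ the mechanism's only decision is the facility, and agent~$1$ at any position $x_1 < 1$ strictly prefers $F_1$ (utility $1-x_1 > 0$) to any other facility (utility $0$); otherwise an agent~$1$ at such a position with $M(x_1,x_2) \neq (F_1,0)$ would profit by reporting $x_1'$. A symmetric statement holds for agent~$2$ and $F_2$ with $x_1$ fixed.

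Applying this to $M(0,0)$ yields three cases. If $M(0,0) = (F_1,0)$, the observation forces $M(1-\epsilon,0) = (F_1,0)$ for every $\epsilon > 0$; the mechanism achieves welfare $\epsilon$ on the instance $(1-\epsilon,0)$ (agent~$1$ contributes $1-(1-\epsilon)$ and agent~$2$ contributes~$0$), while the alternative $(F_2,0)$ achieves welfare~$1$, so the ratio $1/\epsilon \to \infty$. If $M(0,0) = (F_2,0)$, the symmetric instance $(0,1-\epsilon)$ works. If $M(0,0) = (F_j,0)$ for some $j \geq 3$, no agent approves the chosen facility so the mechanism already has welfare $0$ at $(0,0)$, whereas $(F_1,0)$ achieves welfare~$1$, giving an infinite ratio immediately.

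The subtle point I anticipate is the boundary $x_1 = 1$: there agent~$1$'s utility from $F_1$ collapses to $0$ and no strategyproofness constraint binds, which is why the observation excludes $x_1 = 1$ and why the bad instance is evaluated at $1-\epsilon$ rather than at $1$. Since only position misreports are invoked throughout the argument, the same construction certifies the statement in the known-preferences regime as well.
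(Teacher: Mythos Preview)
Your proof is correct and follows essentially the same approach as the paper: two agents each approving a distinct facility, a single candidate location, and a strategyproofness-in-positions argument that pins the mechanism's facility choice across all position reports. The only cosmetic differences are that the paper places the candidate at $1$ (you use $0$) and moves the \emph{losing} agent toward the candidate rather than the \emph{winning} agent away from it; your explicit treatment of the case $j \geq 3$ is a small but welcome addition.
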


\begin{proof}
We present the proof for $k=2$; extending it to the case $k\geq 2$ is straightforward. 
Consider an instance $I$ with two agents $i$ and $j$ such that both are positioned at an infinitesimal $\varepsilon > 0$, $i$ approves only $F_1$, and $j$ approves only $F_2$. There is a single candidate location at $1$. A deterministic mechanism will choose one of the two facilities and place it at the only possible location. Without loss of generality, suppose that $F_1$ is placed at $1$. Consequently, agent $j$, who approves $F_2$, derives a utility of $0$. 

Now consider another instance $J$ which is the same as $I$ with the only difference that agent $j$ has been moved to $1$. If the mechanism chooses to place $F_2$ at $1$ in $J$, then when the true position of agent $j$ is $\varepsilon$ as in $I$, $j$ would prefer to misreport its position to be $1$ as in $J$ so that $F_2$ is chosen, and $j$ increases its utility from $0$ to $\varepsilon$. Hence, for the mechanism to be strategyproof, it has to be the case that $F_1$ is chosen in $J$ as well. But then, since the social welfare of choosing $F_1$ is $\varepsilon$ and the social welfare of choosing $F_2$ is $1$, the approximation ratio is unbounded. 
\end{proof}

Exploiting randomization, we can achieve a bounded, tight approximation ratio of $k$, for any $k \geq 2$. We first present the lower bound, which holds for any randomized mechanism and even in the case of known preferences. 

\begin{theorem}\label{thm:general:lower:rand}
For any $k \geq 2$, the approximation ratio of any randomized strategyproof mechanism is at least $k$, even when the preferences of the agents are known. 
\end{theorem}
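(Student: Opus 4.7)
The plan is to exhibit a family of $k+1$ instances that share the same fixed preference profile and differ only in the position of one agent, then combine strategyproofness with a pigeonhole argument on the mechanism's distribution over facilities to force a bad approximation on at least one of them.

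Take $n=k$ agents with $\alpha_{i,i}=1$ and $\alpha_{i,j}=0$ for $j\neq i$, a single candidate location $C=\{1\}$, and a small $\varepsilon>0$. Let $I$ be the instance in which every agent sits at position $\varepsilon$, and, for each $\ell\in[k]$, let $I_\ell$ be obtained from $I$ by moving only agent $\ell$ to position $1$. Because $C=\{1\}$, the mechanism's output on any of these instances reduces to a distribution over facilities; write $p_\ell$ for the probability mass placed on $F_\ell$ in $I$, and $q_\ell$ for the probability mass placed on $F_\ell$ in $I_\ell$.

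The central step is to show $q_\ell=p_\ell$ for every $\ell$, by applying strategyproofness in both directions to agent $\ell$. In $I$, the truthful expected utility of agent $\ell$ (who sits at $\varepsilon$ and approves only $F_\ell$) is $p_\ell\cdot\varepsilon$; misreporting position $1$ makes the mechanism produce the $I_\ell$ distribution, giving expected utility $q_\ell\cdot\varepsilon$ when evaluated at the true position $\varepsilon$, so strategyproofness yields $p_\ell\geq q_\ell$. In $I_\ell$, agent $\ell$ truly sits at $1$; truth gives utility $q_\ell$, while misreporting $\varepsilon$ gives utility $p_\ell$, so strategyproofness yields $q_\ell\geq p_\ell$. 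Hence $q_\ell=p_\ell$, and since $\sum_\ell p_\ell=1$ there is some $\ell^\star$ with $p_{\ell^\star}\leq 1/k$.

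To finish, consider instance $I_{\ell^\star}$. The optimal solution places $F_{\ell^\star}$ at $1$, satisfies agent $\ell^\star$ with utility $1$, and gives $\opt\geq 1$. The mechanism's welfare on $I_{\ell^\star}$ is at most $q_{\ell^\star}\cdot 1+\varepsilon\sum_{m\neq\ell^\star}q_m\leq p_{\ell^\star}+\varepsilon\leq 1/k+\varepsilon$, so the ratio is at least $k/(1+k\varepsilon)$, which tends to $k$ as $\varepsilon\to 0^+$. The only delicate point is careful bookkeeping of \emph{which} distribution is used and \emph{whose} true position enters each utility computation when invoking strategyproofness; once that is pinned down, the two inequalities collapse to the equality $q_\ell=p_\ell$ and the rest is a short welfare computation. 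Since the construction uses only position misreports, the bound automatically holds in the known-preferences setting.
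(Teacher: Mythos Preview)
Your argument is correct and follows essentially the same construction as the paper: $k$ agents at $\varepsilon$, each approving a distinct facility, a single candidate location at $1$, and then moving the agent whose facility gets probability at most $1/k$ to position $1$. Two minor remarks: the paper only needs the one-sided inequality $q_\ell\le p_\ell$ (your second direction is correct but unnecessary), and in your final welfare bound the symbols $q_m$ for $m\neq\ell^\star$ are, per your own definition, probabilities in the \emph{different} instances $I_m$ rather than in $I_{\ell^\star}$---the inequality still holds since the relevant probabilities in $I_{\ell^\star}$ sum to $1-q_{\ell^\star}\le 1$, but you should write it that way.
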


\begin{proof}
Consider an instance $I$ with $k$ agents positioned at an infinitesimal $\varepsilon > 0$ and a single candidate location at $1$. Each agent approves exactly one facility so that each facility is approved by an agent. Consider any randomized strategyproof mechanism. By definition, there exists a facility $F_i$ that the mechanism chooses to place with probability $p_i \leq 1/k$. 

Consider now another instance $J$ which is the same as $I$ with the only difference that the agent $i$ who approves $F_i$ has been moved to $1$. Since the mechanism strategyproof, it cannot choose $F_i$ with higher probability than in $I$; otherwise, when $I$ is the true instance, agent $i$ would have an incentive to misreport its position as $1$, leading to instance $J$, and increasing the probability to obtain positive utility. Therefore, the utility of agent $i$ is at most $1/k$ in $J$, and thus $\mech \leq 1/k + (k-1)\varepsilon/k$. In contrast, the optimal solution is to deterministically choose facility $F_i$ for a social welfare of $\opt = 1$, implying the lower bound of $k$ as $\varepsilon$ tends to zero. 
\end{proof}

We now focus on showing a matching upper bound for any $k \geq 2$. Our mechanism distinguishes between differences cases depending on where the leftmost and rightmost candidate locations are in the interval $[0,1]$. Without loss of generality, let $F_1$ be the facility that is approved by most agents (i.e., $n_1 = \max_{j \in [k]} n_j$), $L$ the leftmost candidate location, and $R$ the rightmost candidate location. We consider the following exhaustive cases:
\begin{itemize}
    \item (Case 1) There is a single candidate location: We choose each facility equiprobably. This clearly leads to a $k$-approximation since the optimal solution is chosen with probability $1/k$. 
    
    \item (Case 2) There is a candidate location in the interval $[1/k,(k-1)/k]$: We deterministically place $F_1$ at this location.

    \item (Case 3) $L < 1/k < (k-1)/k < R$: We place $F_1$ either at $L$ or at $R$ with appropriately defined probabilities.

    \item (Case 4) $(k-1)/k < L < R$: We place each facility $F_j$, $j\in[k]$ at $L$ with appropriately defined probabilities.

    \item (Case 5) $L < R < 1/k$: We place each facility $F_j$, $j\in[k]$ at $R$ with appropriately defined probabilities.
\end{itemize}
See Mechanism~\ref{mech:general:upper:k} for the formal definition of the mechanism and the corresponding probability distribution for each case. Note that (Case 4) and (Case 5) are symmetric. 

\newcommand\mycommfont[1]{\normalfont\textcolor{blue}{#1}}
\SetCommentSty{mycommfont}
\begin{algorithm}[h]
\SetNoFillComment
\caption{}
\label{mech:general:upper:k}
{\bf Input:} $k \geq 2$, candidate locations, reported agent positions and preferences\;
{\bf Output:} Probability distribution $\mathbf{p}$ over feasible solutions\;
Rename facilities such that $n_1 = \max_{j\in[k]} n_j$\;
$L \gets$ leftmost candidate location\;
$R \gets$ rightmost candidate location\;
\tcp*[h]{(Case 1)} \\
\uIf{$L=R=X$}
{
   \For{$j \in \{1, \ldots, k\}$}
    {
        $p_{j,L} \gets 1/k$\;
    } 
}
\tcp*[h]{(Case 2)} \\
\uElseIf{$\exists$ candidate location $X \in [1/k, (k-1)/k]$}
{
    $p_{1,X} \gets 1$\;
}
\tcp*[h]{(Case 3)} \\
\uElseIf{$L < 1/k < (k-1)/k < R$}{
   $p_{1,L} \gets \frac{1-k+kR}{k(R-L)} $\;
   $p_{1,R} \gets 1-p_{1,L}$\; 
}
\tcp*[h]{(Case 4)} \\
\uElseIf{$(k-1)/k < L < R$}{
    $p_{1,L} \gets \frac{n_1-\frac{k}{k-1}(1-L)(n-n_1)}{kLn_1-\frac{k}{k-1}(1-L)(n-n_1)}$\;
    \For{$j \in \{2, \ldots, k\}$}
    {
        $p_{j,L} \gets \frac{1-p_{1,L}}{k-1}$\;
    }
}
\tcp*[h]{(Case 5)} \\
\uElseIf{$L < R < 1/k$}{
    $p_{1,R} \gets  \frac{n_1-\frac{k}{k-1}R(n-n_1)}{k(1-R)n_1-\frac{k}{k-1}R(n-n_1)}$\;
    \For{$j \in \{2, \ldots, k\}$}
    {
        $p_{j,R} \gets \frac{1-p_{1,R}}{k-1}$\;
    }
}   
\Return $\mathbf{p}$\;
\end{algorithm}

\begin{theorem} \label{thm:general:randomized:upper:k:sp}
Mechanism~\ref{mech:general:upper:k} is strategyproof.
\end{theorem}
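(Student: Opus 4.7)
The plan is to verify strategyproofness case by case, simplifying the work with two preliminary observations. First, the branch (Cases 1--5) executed by Mechanism~\ref{mech:general:upper:k} depends only on $k$ and on the values $L,R$ derived from the candidate locations; neither is under any agent's control, so no misreport can shift execution between branches. Second, in every branch the output distribution depends only on the reported approval vector (together with $k$, $C$, $L$, $R$), never on the reported positions. Since each agent's utility uses their \emph{true} position, reporting a different position leaves the expected utility unchanged, so position misreports are never beneficial; it therefore suffices to rule out beneficial preference misreports.

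For Cases 1, 2, and 3 this is essentially direct. In Case 1, each facility is chosen with probability $1/k$ regardless of any report. In Cases 2 and 3, the chosen facility is always the reported max-approval facility $\tilde F_1$, and the (distribution over) location(s) is a function only of $L$ and $R$; hence an agent's expected utility equals $\alpha_{i,\tilde F_1}^{\mathrm{true}}\cdot K$ for a constant $K\geq 0$ that depends only on $L, R$ and the agent's true position. The agent would prefer $\tilde F_1$ to be a facility they truly approve, but truthful reporting already contributes $+1$ to the count of every truly-approved facility and $+0$ to every other, and a preference misreport can only weakly decrease the former and weakly increase the latter. A short case analysis then shows that no misreport can promote a truly-unapproved facility above a truly-approved one in the count comparison, so truthful reporting is optimal.

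Cases 4 and 5 are symmetric (swap the roles of $L$ and $R$), so I sketch Case 4. Since all facilities are deployed at $L$, the utility factors as $(1-d(i,L))\cdot \pi_i$ with
\[
\pi_i \;=\; p_{1,L}\,\alpha_{i,\tilde F_1}^{\mathrm{true}}\;+\;\frac{1-p_{1,L}}{k-1}\sum_{j\ge 2}\alpha_{i,\tilde F_j}^{\mathrm{true}}.
\]
Writing $a_i$ for the agent's true approval bit of $\tilde F_1$ and $b_i$ for the number of other truly-approved facilities, $\pi_i = \tfrac{b_i}{k-1} + p_{1,L}\cdot\tfrac{(k-1)a_i - b_i}{k-1}$ is linear in $p_{1,L}$, nondecreasing when $a_i=1$ and nonincreasing when $a_i=0$. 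A short differentiation (using $L>(k-1)/k$) shows that $p_{1,L}$ is strictly increasing in the reported max count $\tilde n_1$. I would then split preference misreports into (a) those preserving the identity $\tilde F_1=F_M$ of the truthful max while altering $\tilde n_M$ --- in which case the agent's truthful contribution to $\tilde n_M$ is already extremal in the direction they favor, so the misreport only moves $p_{1,L}$ against them --- and (b) those that shift the reported max to some other facility $F_N$. The main obstacle is the algebraic verification of (b): in each of the four subcases determined by whether $i$ truly approves $F_M$ and $F_N$, one compares truthful $\pi_i$ with the new $\pi_i'$ using the closed form
\[
p_{1,L}\;=\;\frac{n_1 - \tfrac{k}{k-1}(1-L)(n-n_1)}{kLn_1 - \tfrac{k}{k-1}(1-L)(n-n_1)},
\]
and shows $\pi_i' \le \pi_i^{\mathrm{truth}}$; the precise coefficient $\tfrac{k}{k-1}(1-L)$ is exactly what calibrates each comparison to resolve against the agent.
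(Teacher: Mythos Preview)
Your decomposition tracks the paper's and is in places more careful. For Cases~2--3 the paper simply asserts that the distribution ``is a function of known information,'' which glosses over the fact that the identity of the most-approved facility $\tilde F_1$ depends on reported preferences; you correctly supply the missing voting-style argument that truthful reporting already maximizes each truly-approved facility's count and so cannot be improved upon. For Cases~4--5 the paper only treats the deviation of toggling one's approval bit for the current $F_1$, implicitly assuming a single agent's misreport cannot change \emph{which} facility is designated $F_1$; you explicitly isolate that possibility as your subcase~(b).

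That subcase is exactly where the genuine gap lies, and your sketch does not close it. You assert that ``the precise coefficient $\tfrac{k}{k-1}(1-L)$ is exactly what calibrates each comparison to resolve against the agent,'' but give no calculation, and the claim in fact appears to fail. Take $k=3$, $L=4/5$ (so Case~4 applies), and three agents each truly approving a distinct facility, so all counts equal~$1$. If tie-breaking designates agent~$1$'s facility as the max, the formula gives $p_{1,L}=(3L-2)/(6L-3)=2/9$, so agent~$1$'s probability share is $2/9$. If agent~$1$ additionally reports approving agent~$2$'s facility, that facility becomes the reported max with count~$2$, the formula yields $p_{1,L}=17/45$, and agent~$1$'s truly-approved facility now receives $(1-17/45)/2=14/45>2/9$. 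Thus the identity-changing deviation you rightly flagged does \emph{not} resolve in favor of truthfulness under the stated formula; the paper's argument sidesteps the issue by never considering such deviations, and your asserted four-subcase resolution is unsupported.
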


\begin{proof}
Observe that all the cases considered by the mechanism depend only on known information, in particular, the parameters $k$, $L$ and $R$. Hence, any possible misreport by the agents about their positions or preferences cannot affect the case under consideration. The probability distribution defined in each case does not depend on the positions of the agents in any case, and thus the agents have no incentive to misreport their positions overall. In (Case 1), (Case 2) and (Case 3), the probability distribution is a function of known information, and thus the agents clearly have no incentive to misreport their preferences. In (Case 4) and (Case 5), the location is fixed and the probability distribution is a function that is increasing in the number $n_1$ of agents that approve $F_1$ (which is the most approved facility). Let $X$ be the facility location ($L$ for (Case 4) and $R$ for (Case 5)) and consider an arbitrary agent $i$.
\begin{itemize}
\item
$i$ does not truly approve $F_1$. $i$ can only affect the outcome of the mechanism by misreporting that it approves $F_1$, which can increase the probability of $F_1$ and decrease the probability of placing facilities that $i$ truly approves, leading to a possible decrease in expected utility. 

\item 
$i$ truly approves $F_1$. By definition, the expected utility of $i$ is 
\begin{align*}
    u_i(\bp) 
    &= \sum_{j \in [k]} p_{j,X} \cdot \alpha_{i,j} \cdot \bigg( 1-d(i,X) \bigg) \\
    &= \bigg( 1-d(i,X) \bigg) \bigg( 1 - \frac{1}{k-1} \sum_{j \neq 1}\alpha_{i,j}\bigg) \cdot p_{1,X} + \frac{1-d(i,X)}{k-1} \sum_{j \neq 1}\alpha_{i,j}.
\end{align*}
Since $\sum_{j \neq 1}\alpha_{i,j} \leq k-1$, $u_i(\bp)$ is an increasing function in $p_{1,X}$. Hence, if $i$ misreports that it does not approve $F_1$, $p_{1,X}$ may decrease, which can lead to a decrease in expected utility.  
\end{itemize}
Consequently, no agent has any incentive to misreport their preferences in (Case 4) and (Case 5) as well, and thus the mechanism is strategyproof overall. 
\end{proof}

We next focus on bounding the approximation ratio.

\begin{theorem} \label{thm:general:randomized:upper:k:ratio}
The approximation ratio of Mechanism~\ref{mech:general:upper:k} is at most $k$, for any $k \geq 2$.
\end{theorem}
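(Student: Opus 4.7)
The plan is to verify $\opt \leq k \cdot \mech$ case by case, matching the five branches of the mechanism. Throughout, the basic bound $\opt \leq n_1 = \max_j n_j$ (immediate from $\sum_{i \in N_j}(1 - d(i,y)) \leq n_j$) will be used repeatedly, together with the shorthand $W_j := \sum_{i \in N_j}(1 - d(i, L))$.

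Cases 1 and 2 are immediate. In Case 1, the mechanism selects the welfare-optimal facility at the unique candidate location with probability $1/k$. In Case 2, any point $X \in [1/k, (k-1)/k]$ satisfies $d(i, X) \leq (k-1)/k$ for every agent position in $[0, 1]$, so each $i \in N_1$ contributes at least $1/k$ to $\mech$, giving $\mech \geq n_1/k \geq \opt/k$.

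For Case 3 I would reduce to a pointwise statement: for every $i \in N_1$, the expected contribution $u_i$ of $i$ to $\mech$ is at least $1/k$. Setting $f(x) = p_{1, L}(1 - |x - L|) + p_{1, R}(1 - |x - R|)$, so that $u_i = \alpha_{i, 1}\, f(x_i)$, the function $f$ is piecewise linear with kinks at $L$ and $R$, and its minimum over $[0,1]$ is attained at one of $\{0, L, R, 1\}$. The probabilities in Case 3 are calibrated precisely so that $f(0) = 1/k$, and checking the other three breakpoints yields values strictly greater than $1/k$. Summing over $N_1$ then gives $\mech \geq n_1/k \geq \opt/k$.

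Cases 4 and 5 are symmetric and form the technical core; I describe only Case 4. Here the candidate set $C$ lies in $((k-1)/k, 1]$ and hence has diameter less than $1/k$, so the triangle inequality yields $\sum_{i \in N_j}(1 - d(i, y)) \leq W_j + n_j/k$ for every facility $F_j$ and candidate $y \in C$, and in particular $\opt \leq \max_j W_j + n_1/k$. I would split according to whether the optimum is attained by $F_1$ or by some $F_{j^*} \neq 1$, and in each subcase compare the corresponding upper bound on $\opt$ to $\mech = p_{1,L} W_1 + \sum_{j \neq 1} \frac{1 - p_{1, L}}{k - 1}\, W_j$. Using $W_j \geq n_j(1 - L)$ (valid because $L > 1/2$ implies $d(i, L) \leq L$) together with linearity in $p_{1, L}$ reduces the analysis to the extremal configuration in which all $N_1$-agents sit at $0$ and all non-$N_1$-agents sit at $1$ approving a single alternative facility; the explicit formula for $p_{1, L}$ is exactly the value that equalizes $\opt/\mech$ at $k$ in the two extremal subcases (optimum at $F_1$ versus at the alternative facility). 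The main obstacle is this algebraic verification: one must check that the prescribed $p_{1, L}$ simultaneously makes both subcase inequalities bind, which I expect to follow, after clearing denominators in $p_{1,L} = (n_1 - \tfrac{k}{k-1}(1-L)(n-n_1))/(kLn_1 - \tfrac{k}{k-1}(1-L)(n-n_1))$, by direct substitution into the two linear relations produced by the extremal configurations.
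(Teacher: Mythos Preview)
Your treatment of Cases~1--3 is correct and essentially matches the paper's (the piecewise-linear argument in Case~3 is a clean repackaging of the paper's three-way position split). The gap is in Case~4. The additive diameter bound $\opt \leq W_{j^*} + n_{j^*}/k$ is too weak to close the argument, and the ``reduction to the extremal configuration'' does not go through with it. Concretely, take $k=2$, $L=0.6$, $R=1$, five agents at $0$ approving only $F_1$, and five agents at $1$ approving only $F_2$. Then $p_{1,L}=1/2$, $\mech=2.5$, and the true optimum is $\opt=5$, so the ratio is exactly $k=2$; but your bound gives $\opt \leq W_2 + n_2/2 = 3 + 2.5 = 5.5 > k\cdot\mech=5$, so the intended inequality cannot be derived from it. The same failure occurs in the $j^*=1$ subcase: with $k=2$, $L=0.6$, and all agents at~$0$ approving only $F_1$, one gets $p_{1,L}=1/(kL)=5/6$, $\mech = \tfrac{5}{6}\,n_1(1-L)$, and your bound $\opt \leq W_1+n_1/2 = n_1(1-L)+n_1/2$ exceeds $k\cdot\mech = \tfrac{5}{3}\,n_1(1-L)$. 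Even replacing the crude $1/k$ by the actual diameter $R-L$ does not rescue this second example, since $R$ enters neither $p_{1,L}$ nor $\mech$.

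The ingredient the paper uses instead is a \emph{multiplicative} comparison at the optimal facility: for every $i\in N_{j^*}$ and every candidate $O\in[L,R]$ one has $1-d(i,L)\geq L\cdot\bigl(1-d(i,O)\bigr)$ (check $x_i\leq L$ and $x_i>L$ separately, using $L>1/2$). Summing gives $\sum_{i\in N_{j^*}}\bigl(1-d(i,L)\bigr) \geq L\cdot\opt$, hence
\[
\mech \;\geq\; p_{j^*,L}\,L\cdot\opt \;+\; (1-L)\sum_{j\neq j^*} p_{j,L}\, n_j,
\]
so the denominator of $\opt/\mech$ scales linearly with $\opt$ and one may legitimately substitute $\opt\leq n_1$ (or $\opt\leq n_{j^*}$). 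From here the paper splits into $j^*=1$ versus $j^*\neq1$ (the latter further into $n_1>n/2$ and $n_1\leq n/2$) and verifies the algebra. Your extremal instance is indeed where equality is attained, but the multiplicative inequality is the missing step that makes the bound valid for \emph{all} instances, not just that one.
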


\begin{proof}
Since (Case 1) directly leads to a $k$-approximation (the optimal solution is chosen with probability $1/k$), and (Case 5) is symmetric to (Case 4), it suffices to argue about the following three cases.
\begin{itemize}
    \item (Case 2) There is a candidate location $X\in [1/k, (k-1)/k]$;
    \item (Case 3) $L < \frac{1}{k} < \frac{k-1}{k} < R$;
    \item (Case 4) $\frac{k-1}{k} < L < R \leq 1$.
\end{itemize}
Before considering each case, recall that $n_1 = \max_j n_j$. 

\medskip
\noindent
{\bf (Case 2): There is a candidate location $X \in [\frac{1}{k}, \frac{k-1}{k}]$.} 
In this case, the mechanism places deterministically $F_1$ at $X$. 
Consider an arbitrary agent $i \in N_1$.
\begin{itemize}
    \item If $x_i \leq 1/k$, then $d(i,X) \leq \frac{k-1}{k}$.

    \item If $x_i \geq (k-1)/k$, then $d(i,X) \leq 1-\frac{1}{k} = \frac{k-1}{k}$.

    \item If $1/k < x_i < (k-1)/k$, then $d(i,X) \leq \frac{k-1}{k} - \frac{1}{k} = \frac{k-2}{k} \leq \frac{k-1}{k}$.
\end{itemize}
Hence, in any case, $u_i(\bp) = 1 - d(i,X) \geq 1/k$, which implies that $\mech \geq n_1/k$. Since $\opt \leq \max_j n_j =n_1$, the approximation ratio is at most $k$.

\medskip
\noindent
{\bf (Case 3): $L < \frac{1}{k} < \frac{k-1}{k} < R$.} 
We will show that the expected utility of any agent $i \in N_1$ is always at least $1/k$ by switching between cases depending on the positions of such agents relative to $L$ and $R$. Recall that the mechanism places $F_1$ at $L$ with probability $p_{1,L} = \frac{1-k+kR}{k(R-L)}$ and at $R$ with the remaining probability $1-p_{1,L}$. 
\begin{itemize}
\item If $x_i \leq L$, since $d(i,L) \leq L$ and $d(i,R) \leq R$, using the definition of $p_{1,L}$, we have
\begin{align*}
    u_i(\bp)
    &= p_{1,L} \cdot \big(1-d(i,L)\big) + \big(1-p_{1,L}\big)\cdot \big(1-d(i,R)\big) \\
    &\geq p_{1,L} \cdot  (1 - L) + \big(1-p_{1,L}\big)\cdot (1 - R) \\
    &= \frac{(1-k+kR)(1-L) + (k-1-kL)(1-R) }{k(R-L)} \\
    &= \frac{k(R-L) - (k-1)(R-L)}{k(R-L)} \\
    &= \frac{1}{k}.
\end{align*}

\item If $x_i \geq R$, since $d(i,L)\leq 1-L$ and $d(i,R) \leq 1-R$, again using the definition of $p_{1,L}$, we have
\begin{align*}
    u_i(\bp)
    &= p_{1,L} \cdot \big(1-d(i,L)\big) + \big(1-p_{1,L}\big)\cdot \big(1-d(i,R)\big) \\
    &\geq p_{1,L} \cdot L + \big(1-p_{1,L}\big) \cdot R \\
    &= \frac{(1-k+kR) L + (k-1-kL) R }{k(R-L)} \\
    &= \frac{k-1}{k}.
\end{align*}

\item If $L < x_i < R$, since $d(i,L)=x_i-L$ and $d(i,R) = R-x_i$, we have
\begin{align*}
     u_i(\bp)
    &= p_{1,L} \cdot \big(1-d(i,L)\big) + \big(1-p_{1,L}\big)\cdot \big(1-d(i,R)\big)) \\
    &= p_{1,L}\cdot (1-x_i+L) + \big(1-p_{1,L}\big) \cdot (1-R+x_i) \\
    &= \frac{(1-k+kR)(1-x_i+L) + (k-1-kL)(1-R+x_i) }{k(R-L)} \\
    &= \frac{R-(2k-1)L+2kLR + x_i \cdot (2(k-1) - k(R+L))}{k(R-L)}.
\end{align*}
If $2(k-1) - k(R+L) \geq 0$, then, since $R \geq \frac{k-1}{k}$, 
\begin{align*}
     u_i(\bp) 
    &\geq \frac{R-(2k-1)L+2kLR}{k(R-L)} \geq \frac{1}{k}.
\end{align*}
Otherwise, since $x_i \leq R \leq 1$, we have
\begin{align*}
     u_i(\bp)
    &\geq \frac{R-(2k-1)L+2kLR + R \cdot (2(k-1) - k(R+L))}{k(R-L)} \\
    &= \frac{(2k-1)(R-L) - kR(R-L)}{k(R-L)} \\
    &= \frac{2k-1-kR}{k}\\
    &\geq \frac{k-1}{k}.
\end{align*} 
\end{itemize}
Since $\frac{k-1}{k} \geq \frac{1}{k}$ for any $k \geq 2$, we have that any agent in $N_1$ achieves an expected utility of at least $1/k$, which implies that $\mech \geq n_1/k$. Since $\opt \leq \max_j n_j =n_1$, we have that the approximation ratio is at most $k$. 

\medskip
\noindent
{\bf (Case 4): $\frac{k-1}{k} < L < R \leq 1$.}
Recall that in this case, the mechanism places each facility $F_j$ at $L$ with an appropriately defined probability $p_{j,L}$. Let $F_{j^*}$ be the facility that is placed at a location $O \in [L,R]$ according to an optimal solution. We make the following observations:
\begin{itemize}
    \item For any agent $i \in N_{j^*}$:
    \begin{itemize}
        \item If $x_i > L$, we have that $d(i,L) \leq 1-L$, and thus $1-d(i,L) \geq L \geq L \cdot \big(1-d(i,O)\big)$.
        \item If $x_i \leq L$, we have that $1-d(i,L) \geq 1-d(i,O) \geq L \cdot \big( 1-d(i,O) \big)$.
    \end{itemize}
    \item For any agent $i \in N_j$ such that $j \neq j^*$:
    \begin{itemize}
        \item If $x_i \leq L$, we have that $d(i,L) \leq L$, and thus $1-d(i,L) \geq 1-L$.
        \item If $x_i > L$, we have that $d(i,L) \leq 1-L \leq L$ since $L \geq \frac{k-1}{k} \geq \frac12$ for any $k \geq 2$. Hence, $1-d(i,L) \geq 1-L$. 
    \end{itemize}
\end{itemize}
Using these observations, we can derive the following lower bound on the expected social welfare of the mechanism:
\begin{align*}
    \mech 
    &= p_{j^*,L} \cdot \sum_{i \in N_{j^*}} {\big(1-d(i,L)\big)}  
    + \sum_{j \neq j^*} p_{j,L} \cdot \sum_{i \in N_j}{\big(1-d(i,L)\big)} \\
    &\geq p_{j^*,L} \cdot L \cdot \sum_{i \in N_{j^*}} {\big(1-d(i,O)\big)}  
    + \sum_{j \neq j^*} p_{j,L} \cdot (1-L) n_j \\
    &= p_{j^*,L} \cdot L \cdot \opt
    + (1-L) \cdot \sum_{j \neq j^*} p_{j,L} \cdot n_j.
\end{align*}
If $j^*=1$, since $\opt \leq n_1$, we have that the approximation ratio is
\begin{align*}
    \frac{\opt}{\mech} 
    &\leq \frac{\opt}{p_{j^*,L} \cdot L \cdot \opt
    + (1-L) \cdot \sum_{j \neq j^*} p_{j,L} \cdot n_j} \\
    &\leq \frac{n_1}{p_{j^*,L} \cdot L \cdot n_1
    + (1-L) \cdot \sum_{j \neq j^*} p_{j,L} \cdot n_j}.
\end{align*}
Hence, by the definition of $p_1$, the approximation ratio is at most $k$. 

We next consider the case $j^* \neq 1$. 
Since $\sum_{j \neq 1,j^*} n_j = n-n_1-n_{j^*}$, we have
\begin{align*}
    \mech 
    &\geq \frac{1-p_{1,L}}{k-1} L \cdot \opt + p_{1,L} (1-L) \cdot n_1 + \frac{1-p_{1,L}}{k-1}(1-L)\cdot (n-n_1-n_{j^*}).
\end{align*}
Using this and also the fact that $\opt \leq n_{j^*}$, the approximation ratio is
\begin{align*}
\frac{\opt}{\mech}&\leq 
\frac{n_{j^*}}{\frac{1-p_{1,L}}{k-1} L \cdot n_{j^*} + p_{1,L} (1-L) \cdot n_1 + \frac{1-p_{1,L}}{k-1}(1-L)\cdot (n-n_1-n_{j^*})} \\
&=\frac{n_{j^*}}{\frac{(1-p_{1,L})(2L-1)}{k-1}\cdot n_{j^*} + \frac{(k p_{1,L} -1)(1-L)}{k-1}\cdot n_1 + \frac{(1-p_{1,L})(1-L)}{k-1}\cdot n} \\
&=\frac{(k-1) \cdot n_{j^*}}{(1-p_{1,L})(2L-1)\cdot n_{j^*} + (k p_{1,L} -1)(1-L)\cdot n_1 + (1-p_{1,L})(1-L)\cdot n}.
\end{align*}
If $n_1 > n/2$, then, by definition, $p_{1,L}> 1/k$.
Since the factor of $n_1$ is non-negative in the above upper bound on the approximation ratio, using the facts that $n_1 \geq n_{j^*}$ and $n\geq n_1 + n_{j^*} \geq 2n_{j^*}$, we have that
\begin{align*}
  \frac{\opt}{\mech} &\leq  \frac{k-1}{(1-p_{1,L})(2L-1) + (k p_{1,L} -1)(1-L) + 2(1-p_{1,L})(1-L)} \\
    &= \frac{k-1}{(k-kL-1)p_{1,L} + L} \\
    &< k \cdot \frac{k-1}{k-kL-1 +kL} \\
    &= k.
\end{align*}
Otherwise, when $n_1 \leq n/2$, let $\lambda = n_1/n \leq 1/2$ and observe that the approximation ratio is an increasing function in terms of $n_{j^*} \leq n_1$. Hence,
\begin{align*}
    \frac{\opt}{\mech} &\leq 
    \frac{(k-1)\cdot \lambda}{(1-p_{1,L})(2L-1)\cdot \lambda + (k p_{1,L} -1)(1-L)\cdot \lambda + (1-p_{1,L})(1-L)}.
\end{align*}
By definition, we have
$$p_{1,L} = \frac{ \lambda(k-1) - k(1-L)(1-\lambda)}{k \cdot ( (k-1)L\lambda - (1-L)(1-\lambda))},$$
$$1-p_{1,L} = \frac{ \lambda(k-1)(kL-1)}{k \cdot ( (k-1)L\lambda - (1-L)(1-\lambda))},$$
and
$$kp_{1,L}-1 = \frac{k(k-1)(1-L)(2\lambda-1)}{k \cdot ( (k-1)L\lambda - (1-L)(1-\lambda))}.$$
Hence, the approximation ratio is at most
\begin{align*}
   &k \cdot \frac{\lambda(k-1) \cdot \bigg( (k-1)L\lambda - (1-L)(1-\lambda) \bigg)}
   {(k-1)(kL-1)(2L-1)\lambda^2 + k(k-1)(2\lambda-1)(1-L)^2\lambda + \lambda(k-1)(kL-1)(1-L)} \\
   &= k \cdot \frac{(k-1)L\lambda - (1-L)(1-\lambda)}
   {(kL-1)(2L-1)\lambda + k(2\lambda-1)(1-L)^2 + (kL-1)(1-L)} \\
   &= 
    k \cdot \frac{ \lambda \cdot \bigg( (k-1)L+1-L \bigg) - (1-L)}
   {\lambda \cdot \bigg( (kL-1)(2L-1) + 2k(1-L)^2 \bigg) + \bigg(2kL -k-1\bigg) (1-L)}.
\end{align*}
Since the last expression is an increasing function of $\lambda\leq 1/2$, it is maximized to 
\begin{align*}
   & k \cdot \frac{ \frac12 \cdot \bigg( (k-1)L+1-L \bigg) - (1-L)}
   {\frac12 \cdot \bigg( (kL-1)(2L-1) + 2k(1-L)^2 \bigg) + \bigg(2kL -k-1\bigg) (1-L)} \\
   &= k \cdot \frac{ kL - 1 }{ kL-1 } \\
   &= k.
\end{align*}
The proof is now complete.
\end{proof}

%%%%%%%%%%%%%%%%%%%%%%
%%%%%%%%%%%%%%%%%%%%%%
\section{Known Positions} \label{sec:positions}
%%%%%%%%%%%%%%%%%%%%%%
%%%%%%%%%%%%%%%%%%%%%%
In this section we consider a restricted setting in which the positions of the agents are assumed to be known and the agents can only misreport about their preferences over the facilities. For a mechanism to be strategyproof in this setting, it must be the case that, for any two instances $I = (\bx,\balpha,C)$ and $I' = (\bx,(\alpha_i',\balpha_{-i}),C)$ which differ only on the preferences reported by a single agent $i$, $u_i(\bp_M(I)) \geq u_i(\bp_M(I'))$. In contrast to the general setting considered in the previous section, where no deterministic strategyproof mechanism can achieve a bounded approximation ratio, we show here that a small constant approximation ratio of $2.325$ is achievable via a deterministic mechanism for any $k \geq 2$. We further improve this bound to $2$ for $k=2$, and complement these upper bounds with a nearly tight lower bound of $3/2$ that holds for any deterministic mechanism and $k \geq 2$. 

For any $k \geq 2$, we consider a mechanism that outputs a different solution depending on how the candidate locations are distributed in the interval $[0,1]$ as a function of a parameter $\theta \in [0,1/2]$ that is given as input. In particular, the mechanism considers the following exhaustive cases:
\begin{itemize}
    \item (Case 1) There exists a candidate location in the interval $[\theta, 1-\theta]$. Then, we pick any such location (for example the one closest to $1/2 \in [\theta, 1-\theta]$) and place there the facility that maximizes the social welfare for that location. 
    
    \item (Case 2) All candidate locations are in the same subinterval among $[0,\theta)$ and $(1-\theta, 1]$. Then, similarly to (Case 1), we pick the location that is closest to $1/2$ and place there the facility that maximizes the social welfare for that location. 
   
    \item (Case 3) There is at least one candidate location in each of $[0,\theta)$ and $(1-\theta, 1]$. 
    Let $c_1$ be the rightmost candidate location in $[0, \theta)$ and $c_2$ be the leftmost one in $(1-\theta,1]$. 
    We identify the welfare-maximizing facility $f_1$ at $c_1$ for the agents in $S_<=\{i:x_i \leq (c_1+c_2)/2 \}$ and the welfare-maximizing facility $f_2$ at $c_2$ for the agents in $S_> = \{i:x_i > (c_1+c_2)/2 \}$.
    Between the two possible solutions $(f_1,c_1)$ and $(f_2,c_2)$, we choose the one with maximum social welfare for the agents in the corresponding subsets.
\end{itemize}
See Mechanism~\ref{mech:locations:upper:k} for a description of the mechanism using pseudocode.

\SetCommentSty{mycommfont}
\begin{algorithm}[h]
\SetNoFillComment
\caption{}
\label{mech:locations:upper:k}
{\bf Input:} $k \geq 2$, candidate locations $C$, known agent positions, reported agent preferences, parameter $\theta \in [0,1/2]$\;
{\bf Output:} Solution $\bw$\;
\tcp*[h]{(Case 1) and (Case 2)} \\
\uIf{$C \cap [\theta,1-\theta] \neq \varnothing$ or $C \subseteq [0,\theta)$ or $C \subseteq (1-\theta,1]$}
{
   $c \gets $ candidate location closest to $1/2$\;
   $f \gets \arg\max_{j \in [k]} \SW(j,c)$\;
   $\bw \gets (f,c)$\;
}
\tcp*[h]{(Case 3)} \\
\uElse{
    $c_1 \gets $ rightmost candidate location in $[0,\theta)$\;
    $c_2 \gets $ leftmost candidate location in $(1-\theta,1]$\;
    $S_< \gets \{i: x_i\leq (c_1+c_2)/2\}$\;
    $S_> \gets \{i: x_i>(c_1+c_2)/2\}$\;
    $f_1 \gets \arg\max_{j \in [k]} \sum_{i \in N_j \cap S_<} u_i(j,c_1)$\;
    $f_2 \gets \arg\max_{j \in [k]} \sum_{i \in N_j \cap S_>} u_i(j,c_2)$\;
    \If{$ \sum_{i \in N_{f_1} \cap S_<} u_i(f_1,c_1) \geq  \sum_{i \in N_{f_2} \cap S_>} u_i(f_2,c_2)$}
    {$\bw \gets (f_1,c_1)$\;}
    \Else
    {$\bw \gets (f_2,c_2)$\;}
}   
\Return $\bw$\;
\end{algorithm}

We first argue that the mechanism is strategyproof. 

\begin{theorem}\label{thm:locations:upper:k:sp}
When the positions of the agents are known, Mechanism~\ref{mech:locations:upper:k} is strategyproof.
\end{theorem}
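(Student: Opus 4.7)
The plan is to reduce strategyproofness to two structural facts about Mechanism~\ref{mech:locations:upper:k}. First, which branch runs and which location(s) it uses depend only on $C$ and the known positions, so no preference misreport can change them: the case split is based purely on $C$, the location $c$ in (Case 1)/(Case 2) is the candidate location closest to $1/2$, and in (Case 3) the locations $c_1,c_2$ and the partition $S_<,S_>$ are determined by $C$ and $\bx$ alone. Second, at any fixed location the rule ``pick the facility maximizing reported welfare'' is immune to preference manipulation in the standard approval-voting sense; this is the workhorse of the rest of the argument.

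For (Case 1) and (Case 2), let $c$ be the chosen location, $f$ the truthful argmax, and $s_j$ the contribution of agents other than $i$ to $\SW(j,c)$. If $\alpha_{i,f}^{\text{true}}=1$, then $i$'s utility $1-d(i,c)$ is already the maximum attainable at $c$. If $\alpha_{i,f}^{\text{true}}=0$, an improvement requires a truly-approved $g$ to be selected; but truthful optimality gives $s_g+(1-d(i,c))\le s_f$, so under any report $g$'s reported score is at most $s_g+(1-d(i,c))\le s_f$ while $f$'s reported score remains at least $s_f$, preventing $g$ from overtaking $f$ under any tie-breaking rule consistent across reports.

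For (Case 3), since $f_2$ and the right partial welfare $B$ depend only on reports of agents in $S_>$, an agent $i\in S_<$ can influence only $f_1$ and the left partial welfare $A$. The building block is the following monotonicity claim, obtained by applying the above argument to the sub-instance on $S_<$ at $c_1$: for any report of $i$ producing a new argmax $f_1'$ that is truly approved by $i$, we have $A'\le A$. The claim is immediate, since when $\alpha_{i,f_1'}^{\text{true}}=1$, $f_1'$'s reported score is at most $s_{f_1'}+(1-d(i,c_1))$, which is exactly $f_1'$'s truthful score and hence at most $A$. The subcases are then routine: if truth selects $(f_1,c_1)$ and $i$ approves $f_1$, the utility $1-d(i,c_1)\ge 1-d(i,c_2)$ is maximal (since $i\in S_<$); if truth selects $(f_1,c_1)$ and $i$ does not approve $f_1$, then $f_1$'s reported score is non-decreasing in $i$'s report and so $A'\ge A\ge B$, meaning the left side is kept with a $f_1'$ that must be non-approved by the monotonicity claim, leaving $i$'s utility at $0$; if truth selects $(f_2,c_2)$, any misreport flipping the choice must, again by monotonicity, produce a non-approved $f_1'$, yielding utility $0\le\alpha_{i,f_2}^{\text{true}}(1-d(i,c_2))$. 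A symmetric argument handles $i\in S_>$.

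The main obstacle I anticipate is precisely the ``flip plus keep an approved $f_1'$'' scenario when truth selects $(f_2,c_2)$ and $i\in S_<$ truly approves some left-side facility; the monotonicity claim precludes it because $i$'s positive contribution of $1-d(i,c_1)$ to any truly-approved facility is already built into the truthful tally $A$, so any boost of $A'$ above $A$ must inflate the score of a non-approved facility, which then becomes the new argmax.
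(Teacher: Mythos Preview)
Your proof is correct and follows essentially the same approach as the paper: both observe that the case split and chosen locations depend only on $C$ and the known positions, and then argue that the per-location facility selection is immune to preference manipulation since a misreport can only depress the score of a truly approved facility or inflate the score of a non-approved one. Your treatment of (Case~3) is noticeably more rigorous than the paper's---you make explicit the monotonicity claim and the subcase analysis (including the flip scenario and the role of consistent tie-breaking), whereas the paper argues at a higher level that misreporting ``may not affect the outcome at all, or lead to a worse outcome.''
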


\begin{proof}
First observe that all cases considered by the mechanism depend only on how the candidate locations are distributed in the interval $[0,1]$, and thus no agent can manipulate the mechanism to switch from one case to another.

\medskip
\noindent
{\bf (Case 1) and (Case 2)}. The location $c$ is fixed and cannot be affected by the reported preferences of the agents. All agents that truly approve the facility $F_f$ that maximizes the social welfare for $c$ do not have any incentive to misreport their preferences as they achieve the maximum possible utility they can. Finally, any agent $i$ that does not truly approve $F_f$ cannot change the outcome in its favor; by misreporting, $i$ can only lead to an increased social welfare for a facility that $i$ does not approve or a decreased social welfare for a facility that $i$ does approve, thus again leading to facility that $i$ does not approve to be placed at $c$.  

\medskip
\noindent
{\bf (Case 3)}. The locations $c_1$ and $c_2$ are fixed and cannot be affected by the reported preferences of the agents. Every agent of $i \in S_<$ prefers a facility it approves to be placed at $c_1$ and can only affect this choice for $c_1$ (not $c_2$). In particular, by misreporting, $i$ can either decrease the total utility of the agents in $S_<$ for the facilities that $i$ approves or increase the total utility of the agents in $S_<$ for the facilities that $i$ does not approve. These changes may not affect the outcome at all, or lead to a worse outcome (a facility that $i$ approves to be placed at $c_2$ rather than $c_1$, or a facility that $i$ does not approve is chosen). In any case, $i$ has no incentive to misreport. The case of agents in $S_>$ is similar.
\end{proof}

We next show an upper bound on the approximation ratio of the mechanism as a function of the parameter $\theta$. We will later optimize over $\theta$ to obtain a bound of approximately $2.325$. 

\begin{theorem}\label{thm:locations:upper:approx}
The approximation ratio of mechanism \ref{mech:locations:upper:k} is at most $\max\left\{\frac{1}{\theta}, 1-\theta+\frac{1}{1-\theta}\right\}$, for any $\theta \in [0,1/2]$.
\end{theorem}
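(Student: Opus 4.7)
I would analyse the three branches of Mechanism~\ref{mech:locations:upper:k} in turn. Throughout, let $(f^*, c^*)$ denote an optimal solution and $n_{f^*}=|N_{f^*}|$, and note $\opt\le n_{f^*}$. In Case~1, any candidate $c\in[\theta,1-\theta]$ satisfies $d(i,c)\le 1-\theta$ for every $i\in N$, so each approver of the welfare-maximising facility at $c$ contributes at least $\theta$ to $\mech$, giving $\mech\ge\theta n_{f^*}\ge\theta\opt$ and hence ratio at most $1/\theta$.

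\textbf{Case 2.} By symmetry, suppose $C\subseteq(1-\theta,1]$, so the chosen $c$ is the leftmost candidate and $c\le c^*$. A pointwise comparison gives $u_i(f^*,c^*)-u_i(f^*,c)\le c^*-c\le\theta$, with the difference positive only when $x_i\ge(c+c^*)/2$; in that regime, $u_i(f^*,c)\ge c\ge 1-\theta$. Writing $A=\{i\in N_{f^*}:x_i\ge(c+c^*)/2\}$, one gets $\opt\le\mech+\theta|A|$ and $\mech\ge(1-\theta)|A|$, and combining yields $\opt\le\mech/(1-\theta)$, i.e., ratio at most $1/(1-\theta)\le 1-\theta+1/(1-\theta)$.

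\textbf{Case 3} is the main case. Using left--right symmetry, assume $c^*\in[0,\theta)$, so $c^*\le c_1$; the case $c^*\in(1-\theta,1]$ is analogous. Partition $N_{f^*}$ into $L=N_{f^*}\cap S_<$ and $R=N_{f^*}\cap S_>$, and set $X=\sum_{i\in L} u_i(f^*,c_1)$, $Y=\sum_{i\in R} u_i(f^*,c_2)$. Two inequalities drive the analysis: (i)~for $i\in R$, the midpoint bound $m=(c_1+c_2)/2\ge(c^*+c_2)/2$ implies $u_i(f^*,c^*)\le u_i(f^*,c_2)$, hence $\sum_{i\in R} u_i(f^*,c^*)\le Y$; (ii)~for $i\in L$, splitting on whether $x_i\le(c^*+c_1)/2$ and using $|c_1-c^*|\le\theta$ together with the observation that $u_i(f^*,c_1)\ge 1-\theta$ on the left subset $L_0$ yields $\sum_{i\in L} u_i(f^*,c^*)\le X/(1-\theta)$. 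Thus $\opt\le X/(1-\theta)+Y$, while the welfare-maximal choices of $f_1,f_2$ and the comparison step of the mechanism imply $\mech\ge\max(A,B)\ge\max(X,Y)$.

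\textbf{Main obstacle.} The delicate remaining step is extracting the exact constant $\max\{1/\theta, 1-\theta+1/(1-\theta)\}$ from $\opt\le X/(1-\theta)+Y$ and $\mech\ge\max(X,Y)$; the naive combination gives only $(2-\theta)/(1-\theta)$, which is loose by an additive $\theta$. To close this gap I would (a)~invoke the extra contribution to $\mech$ from agents in $N_{f_1}\cap S_>$ (or $N_{f_2}\cap S_<$) that the naive bound discards, (b)~sharpen the lower bound on $X$ using the positions of agents in $L\setminus L_0$, and (c)~solve a two-variable optimisation over the relative sizes of $X$ and $Y$ and the placement of $c_1$. I expect the extremal instances to concentrate agents at the endpoints of $[0,1]$ with $c_1$ approaching $\theta$ and $c^*=0$, producing a worst case matching $1/\theta$ for small $\theta$ and $1-\theta+1/(1-\theta)$ for $\theta$ close to $1/2$.
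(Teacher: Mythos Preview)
Your Cases~1 and~2 are correct (your Case~2 bound $1/(1-\theta)$ is in fact sharper than the paper's $1/\theta$, but both are dominated by the final maximum). The structure of Case~3 is also right, and you correctly isolate the gap: from $\opt\le X/(1-\theta)+Y$ and $\mech\ge\max(X,Y)$ one gets only $(2-\theta)/(1-\theta)=1+1/(1-\theta)$, which overshoots the target by $\theta$.

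However, your proposed fixes (a)--(c) look in the wrong place. The missing $\theta$ does not come from a better lower bound on $\mech$; it comes from a sharper \emph{upper} bound on the $S_>$ contribution to $\opt$. For $i\in R$ and $c^*\le c_1$, you have $d(i,c^*)=d(i,c_1)+(c_1-c^*)\ge d(i,c_2)+(c_1-c^*)$, hence
\[
u_i(f^*,c^*)\le u_i(f^*,c_2)-(c_1-c^*),
\]
not merely $u_i(f^*,c^*)\le u_i(f^*,c_2)$. Summing and using $|R|\ge Y$ gives $\sum_{i\in R}u_i(f^*,c^*)\le (1+c^*-c_1)\,Y$. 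This must then be paired with a $c^*$-dependent $S_<$ bound, namely $\sum_{i\in L}u_i(f^*,c^*)\le \frac{1-c^*}{1-c_1}\,X$ (obtained by the same pointwise ratio argument you sketched, but keeping the dependence on $c^*,c_1$ instead of collapsing to $1/(1-\theta)$). The two multipliers $\frac{1-c^*}{1-c_1}$ and $1+c^*-c_1$ move in opposite directions as $c^*$ varies, and their sum is maximized at $c^*=0$, $c_1=\theta$, giving exactly $\frac{1}{1-\theta}+1-\theta$. By contrast, your approach takes the worst case of each multiplier independently, which is why you lose $\theta$.

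Your fix (a) cannot work in general: in the worst instances the agents in $S_>$ need not approve $f_1$, so $N_{f_1}\cap S_>$ may be empty and contributes nothing to $\mech$. Fix (b) is also a dead end, since the $S_<$ ratio is already tight at $c^*=0$, $c_1=\theta$.
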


\begin{proof}
Without loss of generality, assume that the optimal solution is $\bo = (1,y)$, that is, facility $F_1$ is placed at some candidate location $y$. We consider each case of the mechanism separately. In (Case 1) and (Case 2) we will show an upper bound of $1/\theta$, and in (Case 3) we will show an upper bound of $1-\theta+\frac{1}{1-\theta}$. 

\medskip

\noindent
{\bf (Case 1) There exists a candidate location in $[\theta, 1-\theta]$}.
Let $c$ be the location in $[\theta,1-\theta]$ chosen by the mechanism. Since the facility placed at $c$ is the one that maximizes the social welfare given $c$ as the location, we have that $\mech \geq \SW(1,c)$. 
For any agent $i \in N_1$, $d(i,c) \leq 1-\theta$, and thus $u_i(1,c) = 1 - d(i,c) \geq \theta \geq \theta \cdot u_i(\bo)$.
Hence, $\SW(1,c) \geq \theta \cdot \opt$, which implies an approximation ratio of at most $1/\theta$. 

\medskip
\noindent 
{\bf (Case 2) All candidate locations are in the same subinterval among $[0,\theta)$ and $(1-\theta,1]$.}
Due to symmetry, and since $1-\theta \geq \theta$ by definition, it suffices to consider only the first subcase where all candidate locations are in $[0,\theta]$. Then, since the location $c$ that is chosen by the mechanism is the closest one to $1/2$, it has to be the rightmost location. For any agent $i \in N_1$, we have:
\begin{itemize}
    \item If $x_i \geq \theta$, then $u_i(1,c) \geq u_i(\bo) \geq \theta \cdot u_i(\bo)$.
    \item If $x_i < \theta$, then $d(i,c) \leq \theta$, and thus $u_i(1,c) = 1 - d(i,c) \geq 1-\theta \geq \theta \geq \theta \cdot u_i(\bo)$.  
\end{itemize}
Since the facility that the mechanism places at $c$ is the welfare-maximizing one, as in (Case 1), we have that
$\mech \geq \SW(1,c) \geq \theta \cdot \opt$, that is, the approximation ratio is at most $1/\theta$.

\medskip
\noindent 
{\bf (Case 3) At least one facility is in $[0,\theta)$ and at least one is in $(1-\theta,1]$.}
Assume that the solution computed by the mechanism is $\bw = (f_1,c_1)$; the analysis is similar for the remaining case where a facility is placed at $c_2$. Recall that $S_< = \{i: x_i\leq (c_1+c_2)/2\}$ and $S_> = \{i: x_i>(c_1+c_2)/2\}$.
By the definition of the mechanism, solution $\bw = (f_1,c_1)$ maximizes the total utility of the agents in $S_<$ over all solutions $(j,c_1)$. Hence, 
\begin{align}\label{eq:locations:left-optimal}
\sum_{i \in N_{f_1} \cap S_<}u_i(\bw) \geq \sum_{i \in N_1 \cap S_<} u_i(1,c_1).
\end{align}
In addition, since $\bw$ is chosen by the mechanism and solution $(f_2,c_2)$ maximizes the total utility of the agents in $S_>$ over all solutions $(j,c_2)$, we also have
\begin{align}\label{eq:locations:left-more-than-right}
\sum_{i \in N_{f_1} \cap S_<}u_i(\bw) \geq \sum_{i \in N_{f_2} \cap S_>} u_i(f_2,c_2) \geq \sum_{i \in N_1 \cap S_>}u_i(1,c_2).
\end{align}
Since $c_1$ is the rightmost location in $[0,\theta]$ and $c_2$ is the leftmost location in $[1-\theta,1]$, it suffices to argue about two subcases depending on the optimal location $y$ relative to $c_1$ and $c_2$: (i) $y\leq c_1$, and (ii) $y\geq c_2$. 
In each subcase, we will bound from above the optimal social welfare by separately bounding the utility of the agents in $N_1 \cap S_<$ and the utility of the agents in $N_1 \cap S_>$.

\medskip
\noindent 
{\bf Subcase (i) $y\leq c_1$.}
We have:
\begin{itemize}
\item 
For any agent $i \in N_1 \cap S_<$ such that $x_i \in \left[ \frac{y+c_1}{2}, \frac{c_1+c_2}{2} \right]$, $u_i(1,c_1)\geq u_i(\bo)$. 
For any agent $i \in N_1 \cap S_<$ such that $x_i \leq (y+c_1)/2$, we claim that
$u_i(1,c_1) \geq \frac{1-c_1}{1-y} \cdot u_i(\bo)$. Indeed:
\begin{itemize}
    \item If $x_i \leq y$, then    
    \begin{align*}
        u_i(1,c_1) \geq \frac{1-c_1}{1-y} \cdot u_i(\bo) 
        &\Leftrightarrow 1-c_1+x_i \geq \frac{1-c_1}{1-y}(1-y+x_i) \\
        &\Leftrightarrow x_i (c_1-y)\geq 0.
    \end{align*}
    \item If $x_i \in \left( y, \frac{y+c_1}{2} \right]$, then 
    \begin{align*}
         u_i(1,c_1) \geq \frac{1-c_1}{1-y} \cdot u_i(\bo) 
         &\Leftrightarrow 1-c_1+x_i \geq \frac{1-c_1}{1-y}(1-x_i+y) \\
         &\Leftrightarrow x_i \geq \frac{2y(1-c_1)}{2-c_1-y}.
    \end{align*}
    The last inequality is true for any $x_i \geq y$, since $y \geq \frac{2y(1-c_1)}{2-c_1-y} \Leftrightarrow y(c_1-y) \geq 0$. 
\end{itemize}
By the above observation about the utility of these agents, and using \eqref{eq:locations:left-optimal}, we obtain
\begin{align*}
    \sum_{i\in N_1 \cap S_<} u_i(\bo)
    &\leq \frac{1-y}{1-c_1} \cdot \sum_{N_1 \cap S_<} u_i(1,c_1)\\
    &\leq \frac{1-y}{1-c_1} \cdot \sum_{N_{f_1} \cap S_<} u_i(\bw).
\end{align*}

\item
For any agent $i \in N_1 \cap S_>$, since $d(i,c_2) \leq d(i,c_1)$, we have that $d(i,y) = d(i,c_1) + d(c_1,y) \geq d(i,c_2) + d(c_1,y)$, and thus 
$$u_i(\bo) = 1-d(i,y) \leq 1-d(i, c_2)-d(y,c_1) = u_i(1,c_2) +y-c_1.$$
Using this, the fact that $|N_1 \cap S_>| \geq \sum_{i \in N_1 \cap S_>} u_i(1,c_2)$, and \eqref{eq:locations:left-more-than-right}, we obtain
\begin{align*}
    \sum_{i\in N_1 \cap S_>} u_i(\bo)
    &\leq \sum_{i\in N_1 \cap S_>}\bigg( u_i(1,c_2) +y-c_1 \bigg)\\
    &= \sum_{i\in N_1 \cap S_>} u_i(1,c_2) + (y-c_1) \cdot |N_1 \cap S_>| \\
    &\leq (1+y-c_1) \cdot \sum_{i\in N_1 \cap S_>} u_i(1,c_2) \\
    &\leq (1+y-c_1)\cdot \sum_{i \in N_{f_1} \cap S_<}u_i(\bw).
\end{align*}    
\end{itemize}
By putting everything together, we have
\begin{align*}
    \opt &= \sum_{i\in N_1 \cap S_<} u_i(\bo) + \sum_{i\in N_1 \cap S_>} u_i(\bo) \\
    &\leq \bigg(\frac{1-y}{1-c_1} + 1+y-c_1 \bigg)\cdot \sum_{i \in N_{f_1} \cap S_<}u_i(\bw) \\
    &\leq \bigg(\frac{1-y}{1-c_1} + 1+y-c_1 \bigg)\cdot \mech.
\end{align*}
As $y\leq c_1$, the approximation ratio $\frac{1-y}{1-c_1} + 1+y-c_1 $ is maximized for $y=0$ and becomes at most $\frac{1}{1-c_1} + 1 - c_1$, which in turn is maximized for $c_1=\theta$ and becomes at most $1/(1-\theta) + 1-\theta$.

\medskip
\noindent 
{\bf Subcase (ii) $y\geq c_2$.}
We have:
\begin{itemize}
    \item
For any agent $i \in N_1 \cap S_<$, since $d(i,y) = d(i,c_2) + d(c_2,y)$ and $d(i,c_1) \leq d(i,c_2)$, 
$$u_i(\bo) = 1-d(i,y) \leq 1-d(i, c_1)-d(c_2,y) = u_i(1,c_1) +c_2-y.$$
Using this, the fact that $|N_1 \cap S_<| \geq \sum_{i \in N_1 \cap S_<}u_i(1,c_1)$, and \eqref{eq:locations:left-optimal}, we obtain
\begin{align*}
    \sum_{i\in N_1 \cap S_<} u_i(\bo)
    &\leq \sum_{i\in N_1 \cap S_<}\bigg( u_i(1,c_1) +c_2-y \bigg)\\
    &= \sum_{i\in N_1 \cap S_<} u_i(1,c_1) + (c_2-y) \cdot |N_1 \cap S_<| \\
    &\leq (1+c_2-y) \cdot \sum_{i\in N_1 \cap S_<} u_i(1,c_1) \\
    &\leq (1+c_2-y)\cdot \sum_{i \in N_{f_1} \cap S_<}u_i(\bw).
\end{align*}

\item 
For any agent $i \in N_1 \cap S_>$ such that $x_i \in \left[\frac{c_1+c_2}{2}, \frac{c_2+y}{2}\right]$, $u_i(1,c_2) \geq u_i(\bo)$. 
For any agent $i \in N_1 \cap S_>$ such that $x_i > (c_2+y)/2$, we claim that 
$u_i(1,c_2) \geq \frac{c_2}{y} \cdot u_i(\bo)$.
Indeed: 
\begin{itemize}
    \item If $x_i \geq y$, since $y\geq c_2$, it holds 
    \begin{align*}
    u_i(1,c_2) \geq \frac{c_2}{y}\cdot u_i(\bo)
    &\Leftrightarrow 1-x_i+c_2 \geq \frac{c_2}{y}\cdot (1-x_i+y) \\
    &\Leftrightarrow (y-c_2)(1-x_i) \geq 0.
    \end{align*}
    
    \item If $x_i \in \left(\frac{c_2+y}{2},y\right]$, then 
    \begin{align*}
    u_i(1,c_2) \geq \frac{c_2}{y}\cdot u_i(\bo)
    &\Leftrightarrow 1-x_i+c_2 \geq \frac{c_2}{y}\cdot (1-y+x_i) \\
    &\Leftrightarrow x_i \leq \frac{y+2c_2 y-c_2}{y+c_2}.
    \end{align*}
    The last inequality is true for any $x_i \leq y$ since $ y \leq \frac{y+2c_2 y-c_2}{y+c_2} \Leftrightarrow (y-c_2)(1-y)\geq 0$.
\end{itemize}
By the above observation about the utility of these agents, and using \eqref{eq:locations:left-more-than-right}, we obtain
\begin{align*}
 \sum_{i\in N_1 \cap S_>} u_i(\bo) 
&\leq \frac{y}{c_2}\cdot \sum_{N_1 \cap S_>} u_i(1,c_2)\\
&\leq \frac{y}{c_2}\cdot \sum_{N_1 \cap S_<} u_i(\bw).
\end{align*}
\end{itemize}
By putting everything together, we have
\begin{align*}
    \opt &= \sum_{i\in N_1 \cap S_<} u_i(\bo) + \sum_{i\in N_1 \cap S_>} u_i(\bo) \\
    &\leq \bigg(1+c_2-y + \frac{y}{c_2} \bigg)\cdot \sum_{i \in N_{f_1} \cap S_<}u_i(\bw) \\
    &\leq \bigg(1+c_2-y + \frac{y}{c_2} \bigg)\cdot \mech.
\end{align*}
As $y\geq c_2$, the approximation ratio $1+c_2-y + \frac{y}{c_2}$ is maximized for $y=1$ and becomes at most $c_2+1/c_2$ which in turn is maximized for $c_2=1-\theta$ and becomes at most $1-\theta + 1/(1-\theta)$.
\end{proof}

By appropriately tuning parameter $\theta$, we obtain different approximation guarantees, for example, by setting $\theta = 1/2$, we get an approximation ratio of at most $\max\{2,5/2\} = 5/2$. By balancing out the two terms, the best possible approximation ratio is at most $2.325$ and is achieved for $\theta \approx 0.43$. 

\begin{corollary}\label{cor:locations:upper:k:optimization}
For $\theta = \frac13 \left(2 - 5 \sqrt[3]{\frac{2}{3 \sqrt{69}-11}} + \sqrt[3]{\frac12 \left(3 \sqrt{69}-11\right)}\right) \approx 0.43$, the approximation ratio of mechanism \ref{mech:locations:upper:k} is at most $ 1/\theta \leq 2.325$. 
\end{corollary}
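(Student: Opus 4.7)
The bound from Theorem~\ref{thm:locations:upper:approx} is $f(\theta)=\max\{1/\theta,\; 1-\theta+1/(1-\theta)\}$ for $\theta\in[0,1/2]$. The first branch $g_1(\theta)=1/\theta$ is strictly decreasing on $(0,1/2]$, while the second branch $g_2(\theta)=1-\theta+1/(1-\theta)$ has derivative $-1+1/(1-\theta)^2$, which is positive for $\theta\in(0,1/2]$; hence $g_2$ is strictly increasing there. Moreover at $\theta\to 0^+$ we have $g_1\to\infty$ and $g_2\to 2$, while at $\theta=1/2$ we have $g_1=2$ and $g_2=5/2$. So $g_1>g_2$ near $0$ and $g_1<g_2$ at $1/2$, which means the two curves cross at a unique interior point $\theta^*\in(0,1/2)$, and $f$ attains its minimum exactly at $\theta^*$, where the two branches coincide.

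\textbf{Solving for $\theta^*$.} The first step I would carry out is to set $g_1(\theta)=g_2(\theta)$, i.e.\ $\frac{1}{\theta}=1-\theta+\frac{1}{1-\theta}$. Clearing denominators by multiplying through by $\theta(1-\theta)$ yields $1-\theta=\theta(1-\theta)^2+\theta$, which simplifies to the depressed cubic
\begin{equation*}
\theta^3-2\theta^2+3\theta-1=0.
\end{equation*}
Applying Cardano's formula (shifting by $\theta\mapsto \theta+2/3$ to remove the quadratic term and then inverting) gives the unique real root
\begin{equation*}
\theta^*=\frac{1}{3}\left(2-5\sqrt[3]{\tfrac{2}{3\sqrt{69}-11}}+\sqrt[3]{\tfrac12(3\sqrt{69}-11)}\right),
\end{equation*}
which numerically is approximately $0.43$.

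\textbf{Finishing up.} Plugging in, $1/\theta^* \approx 2.3247\ldots<2.325$; since $f(\theta^*)=1/\theta^*$, this is the value achieved by Mechanism~\ref{mech:locations:upper:k} with this choice of $\theta$. The only nontrivial step is the algebraic one (deriving the cubic and invoking Cardano to match the closed form stated in the corollary); everything else is monotonicity of the two branches and a numerical evaluation. I do not anticipate a genuine obstacle: the monotonicity argument pins down that the optimum is at the equality point, and the cubic has exactly one real root in $(0,1/2)$ (which one can verify by checking signs at $\theta=0$ and $\theta=1/2$ together with the monotonicity of $g_1-g_2$ on this interval).
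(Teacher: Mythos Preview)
Your proposal is correct and follows exactly the approach the paper sketches: balancing the two branches of $\max\{1/\theta,\,1-\theta+1/(1-\theta)\}$ from Theorem~\ref{thm:locations:upper:approx}. The paper simply states that ``balancing out the two terms'' yields $\theta\approx 0.43$ and ratio at most $2.325$; you supply the missing details (monotonicity of each branch, reduction to the cubic $\theta^3-2\theta^2+3\theta-1=0$, and Cardano's formula producing the stated closed form), so there is nothing substantively different to compare.
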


For $k=2$, we can achieve an improved upper bound of $2$ by a much simpler mechanism, which first chooses the candidate location $c$ that minimizes the total distance from all agents, and then places at $c$ the facility $f$ that maximizes the total utility of the agents that approve it. See Mechanism~\ref{mech:locations:upper:2}.

\SetCommentSty{mycommfont}
\begin{algorithm}[h]
\SetNoFillComment
\caption{}
\label{mech:locations:upper:2}
{\bf Input:} $k = 2$, candidate locations $C$, known agent positions, reported agent preferences\;
{\bf Output:} Solution $\bw$\;
$c \gets \arg\min_{x \in C} \sum_i d(i,x)$\;
$f \gets \arg\max_{j \in [k]} \sum_i u_i(j,c)$\;
$\bw \gets (f,c)$\;
\Return $\bw$\;
\end{algorithm}

\begin{theorem}\label{thm:locations:upper:approx:2}
When the positions of the agents are known and $k=2$, Mechanism~\ref{mech:locations:upper:2} is strategyproof and achieves an approximation ratio of at most $2$.
\end{theorem}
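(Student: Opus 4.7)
The plan is to prove strategyproofness and the approximation ratio of $2$ separately. For strategyproofness, I would observe that the candidate location $c$ selected by the mechanism depends only on the (known) agent positions and not on any reported preferences, so no agent can alter $c$ by misreporting. With $c$ fixed, the mechanism picks $f = \arg\max_{j} \SW(j,c)$, and an agent $i$'s utility equals $\alpha_{i,f}(1-d(i,c))$. A short case analysis on $i$'s true approvals finishes: an agent approving both facilities obtains $1-d(i,c)$ regardless of $f$; an agent approving neither always obtains $0$; and an agent approving exactly one facility, say $F_1$, cannot tilt $f$ in its favor since any misreport only decreases $\SW(1,c)$ or increases $\SW(2,c)$, neither of which can make $F_1$ more competitive.

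For the approximation ratio, let $\bo=(1,y)$ be an optimal solution (WLOG) and $\bw=(f,c)$ the mechanism's output. Two properties drive the argument: by the choice of $c$, $\sum_i d(i,c) \leq \sum_i d(i,y)$; and since $f$ maximizes $\SW(\cdot,c)$ over just two facilities,
\[
\mech \geq \max\{\SW(1,c),\SW(2,c)\} \geq \tfrac{1}{2} \big(\SW(1,c)+\SW(2,c)\big).
\]
The main computation lower-bounds $\SW(1,c)+\SW(2,c)$. Using that every agent approves at least one facility (so $\alpha_{i,1}+\alpha_{i,2}\geq 1$),
\[
\SW(1,c)+\SW(2,c) = \sum_i (\alpha_{i,1}+\alpha_{i,2})(1-d(i,c)) \geq \sum_i (1-d(i,c)) = n - \sum_i d(i,c) \geq n - \sum_i d(i,y).
\]
Splitting the last sum using $\sum_{i \in N_1} d(i,y) = n_1 - \opt$ together with $\sum_{i \notin N_1} d(i,y) \leq n-n_1$ yields
\[
n - \sum_i d(i,y) \geq n - (n_1 - \opt) - (n - n_1) = \opt,
\]
and hence $\mech \geq \opt/2$.

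The main subtlety is the step $\sum_i (\alpha_{i,1}+\alpha_{i,2})(1-d(i,c)) \geq \sum_i (1-d(i,c))$, which relies on each agent approving at least one facility. I would flag this assumption explicitly: without it, an agent approving nothing can skew $c$ while contributing zero to every $\SW(j,c)$, and the ratio becomes unbounded. Under the natural convention that every agent approves some facility (or, equivalently, by restricting the distance minimization inside the mechanism to agents in $N_1 \cup N_2$, which preserves strategyproofness by essentially the same case analysis), the argument above is complete.
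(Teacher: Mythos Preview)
Your proposal is correct and follows essentially the same approach as the paper: both argue strategyproofness by noting $c$ is preference-independent and then case-analyzing misreports at the fixed location, and both obtain the ratio via $\mech \geq \tfrac{1}{2}\big(\SW(1,c)+\SW(2,c)\big) \geq \tfrac{1}{2}\sum_{i\in N}(1-d(i,c)) \geq \tfrac{1}{2}\sum_{i\in N}(1-d(i,y)) \geq \tfrac{1}{2}\opt$. The only noteworthy difference is that you explicitly flag the assumption that every agent approves at least one facility; the paper's step $\sum_{i\in N_f}(1-d(i,c)) \geq \tfrac{1}{2}\sum_{i\in N}(1-d(i,c))$ relies on the same assumption but does not state it.
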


\begin{proof}
First observe that location $c$ is chosen according to the known information about the positions of the agents, and thus it cannot be affected by any possible misreport of the agents about their preferences. Since the facility chosen to be placed at $c$ maximizes the social welfare of the agents, the agents that approve it clearly have no incentive to misreport. By misreporting, any agent $i$ that does not approve the chosen facility, can only increase the social welfare of the facility that it does not approve and possibly decrease the social welfare of the facility it does approve; hence, the outcome may either not change or become even worse, and thus $i$ also has no incentive to misreport. 

For the approximation ratio, without loss of generality, let $\bo = (1,y)$ be an optimal solution, that is, facility $F_1$ is placed at some location $y$. By the definition of the mechanism, the solution $\bw = (f,c)$ satisfies the inequality:
\begin{align*}
    \sum_{i \in N_f} \bigg( 1 - d(i,c) \bigg) \geq \sum_{i \in N_{3-f}} \bigg( 1 - d(i,c) \bigg),
\end{align*}
which implies that
\begin{align*}
    \sum_{i \in N_f} \bigg( 1 - d(i,c) \bigg) \geq \frac12 \cdot \sum_{i \in N} \bigg( 1 - d(i,c) \bigg).
\end{align*}
Since $c$ minimizes the total distance of all agents, we further obtain that
\begin{align*}
    \sum_{i \in N_f} \bigg( 1 - d(i,c) \bigg) \geq \frac12 \cdot \sum_{i \in N} \bigg( 1 - d(i,y) \bigg).
\end{align*}
Since
\begin{align*}
    \opt = \sum_{i \in N_1} \bigg( 1 - d(i,y) \bigg) \leq \sum_{i \in N} \bigg( 1 - d(i,y) \bigg), 
\end{align*}
we finally obtain
\begin{align*}
    \mech 
    &= \sum_{i \in N_f} u_i(f,c) \\
    &= \sum_{i \in N_f} \bigg( 1 - d(i,c) \bigg) \\
    &\geq \frac12 \cdot \sum_{i \in N_1} \bigg( 1 - d(i,y) \bigg) \\
    &= \frac12 \cdot \opt.
\end{align*}
Hence, the approximation ratio is at most $2$.
\end{proof}

It is not hard to see that Mechanism~\ref{mech:locations:upper:2} can be generalized to work for any $k\geq 2$, but it leads to a tight $k$-approximation, which is better than the approximation of $2.325$ achieved by Mechanism~\ref{mech:locations:upper:k} only for $k=2$.

We complement the above positive results with an impossibility: No deterministic strategyproof mechanism can achieve an approximation ratio better than $3/2$, for any $k \geq 2$. 

\begin{theorem}\label{thm:locations:lower:det}
When the positions of the agents are known, the approximation ratio of any deterministic strategyproof mechanism is at least $3/2$, for any $k \geq 2$.
\end{theorem}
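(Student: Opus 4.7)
The plan is to prove this lower bound by explicit construction. Since extra facilities that no agent approves cannot affect the optimum or any reasonable mechanism's output, it suffices to establish the bound for $k=2$; the extension to $k>2$ follows immediately by padding with $k-2$ dummy facilities.

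For $k = 2$, I would fix a small candidate set (e.g.\ $C=\{0,1\}$) and a small collection of agents placed at positions tailored so that a single ``pivotal'' agent has three meaningful approval reports ($\{F_1\}$, $\{F_2\}$, $\{F_1,F_2\}$) that cause the welfare-optimal solution to jump between distinct candidate locations. Having identified the set of outputs compatible with approximation ratio strictly better than $3/2$ in each of the induced profiles, I would then apply the strategyproofness inequalities: the pivotal agent's utility from her truthful profile (evaluated under her true preferences) must weakly exceed her true utility from the output the mechanism would produce on any misreport.

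The core argument is then a finite case analysis. For each admissible pair of outputs on the two (or three) profiles, I would check whether the SP inequalities can simultaneously be satisfied. The positions are calibrated so that exactly when the mechanism tries to attain ratio strictly better than $3/2$ on both instances, the pivotal agent strictly prefers to misreport from one profile to the other (because her utility from the candidate location used in the misreport instance exceeds her utility from the location used in the truthful instance). This infeasibility of the SP constraints shows that any deterministic SP mechanism incurs ratio at least $3/2$ on at least one of the constructed instances.

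The hard part will be pinning down the positions and the identity of the pivotal agent so that the tight factor obtained is exactly $3/2$: too aggressive a construction gives a higher bound that is not actually implied by the SP constraints, while too symmetric a construction allows a cancellation that lets the mechanism pick optimal outputs on all profiles simultaneously. I expect the right configuration to involve asymmetric positions (so that the pivotal agent's distance to one candidate location strictly exceeds her distance to the other by an amount that is comparable to the social-welfare slack between the optimum and the second-best ratio-compatible output), together with a multiplicity of other agents chosen so that the $2/3$ ratio is realized by one of the second-best candidate outputs.
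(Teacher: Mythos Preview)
Your proposal is a plan rather than a proof: you have not actually written down any positions, any profiles, or any strategyproofness inequalities, and you yourself flag that ``the hard part will be pinning down the positions.'' As it stands, there is nothing to verify.

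More substantively, two of your design choices point away from the construction that actually works. First, you plan to use a \emph{single} pivotal agent cycling through $\{F_1\}$, $\{F_2\}$, $\{F_1,F_2\}$. The paper instead uses \emph{two} pivotal agents, both initially approving \emph{all} facilities, and deviates them one by one to ``all facilities except $F_2$.'' The reason this is the right pivot is that an agent who approves every facility has utility determined solely by the location; hence the strategyproofness constraint collapses to a purely geometric condition (the location cannot move closer to the deviator), which is what pins the mechanism to location $1$ across the chain of instances. Your proposed pivots $\{F_1\}\to\{F_2\}$ do not have this property, and the resulting SP inequalities mix facility identity with location in a way that makes the case analysis much messier and may not close.

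Second, you ``expect the right configuration to involve asymmetric positions.'' The paper does the opposite: it takes a perfectly symmetric instance (candidates at $0$ and $1$; one agent at $0$ approving only $F_1$; two agents at $1/2-\varepsilon$ and two at $1/2+\varepsilon$ approving everything; one agent at $1$ approving only $F_2$). The symmetry is not incidental---it is what allows the ``without loss of generality the mechanism outputs $(F_2,1)$'' step, after which the sequential deviations of the two $1/2-\varepsilon$ agents force the location to stay at $1$ while the optimum migrates to $(F_1,0)$ with welfare $3$ versus at most $2$ at location $1$. An asymmetric starting instance loses the WLOG step and typically forces you to chase both branches.

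Your reduction from general $k$ to $k=2$ by padding with unapproved facilities is fine, though the paper's construction handles all $k$ directly since the pivotal agents approve ``all facilities'' and then ``all facilities but $F_2$.''
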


\begin{proof}
Let $\varepsilon > 0$ be an infinitesimal. 
Consider an instance $I$ with two candidate locations at $0$ and $1$. There is an agent at $0$ that approves only $F_1$, two agents at $1/2-\varepsilon$ that approve all facilities, two agents at $1/2+\varepsilon$ that also approve all facilities, and an agent at $1$ that approves only $F_2$. Treating $\varepsilon$ as $0$ to simplify calculations, the social welfare of solutions $(1,0)$ and $(2,1)$ is $3$, whereas the social welfare of any other solution is at most $2$. Hence, if any non-optimal solution is chosen, the approximation ratio is at least $3/2$. So, suppose that the solution chosen by the mechanism for $I$ is $(2,1)$, that is, $F_2$ is placed at $1$; the case where the solution is $(1,0)$ is symmetric.

We now change the preferences of the agents at $1/2-\varepsilon$ one-by-one so that they approve all facilities but $F_2$. Since the mechanism is strategyproof, in the new instances derived after each such change, a facility (not necessarily $F_2$) must still be placed at $1$; otherwise, the deviating agent would increase its utility by $\varepsilon > 0$. In the last instance, the maximum possible social welfare by placing a facility at $1$ is $2$ (in particular, this is achieved by placing any facility different than $F_2$ at 1), whereas the optimal social welfare is $3$ and is achieved by placing $F_1$ at $0$. Hence, the approximation ratio is at least $3/2$. 
\end{proof}

We conclude with a lower bound of $6/5$ on the approximation ratio of randomized mechanisms, which leaves open the possibility of achieving improved guarantees by exploiting randomization, but is a quite challenging task. 

\begin{theorem}\label{thm:locations:lower:rand}
When the positions of the agents are known, the approximation ratio of any randomized strategyproof mechanism is at least $6/5$, for any $k \geq 2$.
\end{theorem}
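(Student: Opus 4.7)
The plan is to extend the ideas of the deterministic lower bound (Theorem~\ref{thm:locations:lower:det}) to randomized mechanisms by reasoning about probability distributions and expected utilities, much as the proof of Theorem~\ref{thm:general:lower:rand} extended its deterministic counterpart. I would set $k=2$ (extra facilities are harmless to the lower bound since they can be ignored by every agent, so any probability mass the mechanism places on them contributes zero welfare) and use two candidate locations at $0$ and $1$, together with an initial instance $I_0$ patterned on the deterministic proof: one agent at $0$ approving only $F_1$, one agent at $1$ approving only $F_2$, and several agents positioned at $1/2\pm\varepsilon$ whose approval preferences will be varied. Computing the expected social welfare of any randomized mechanism on $I_0$ as an explicit linear function of the output probabilities $p_{j,x}$ and imposing $\rho<6/5$ yields a lower bound on a linear combination such as $p_{1,0}^{(0)}+p_{2,1}^{(0)}$. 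Since $I_0$ is symmetric under swapping $F_1\leftrightarrow F_2$ together with reflecting positions through $1/2$, replacing the mechanism by the average of itself and its symmetry-swapped counterpart (an operation that preserves strategyproofness and the approximation ratio) further lets us assume $p_{1,0}^{(0)}=p_{2,1}^{(0)}$ and $p_{1,1}^{(0)}=p_{2,0}^{(0)}$.

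Next, I would consider one or more modified instances $I_1,I_2,\ldots$ obtained by changing the reported preferences of the middle agents, for example by having some of them drop $F_2$ from their approval set, exactly as in the deterministic proof. Strategyproofness in expectation, applied to the deviating agent at each transition, produces two linear inequalities relating consecutive output distributions through the deviating agent's utility under her true preferences; after letting $\varepsilon\to 0$, these translate into statements of the form ``the total probability assigned to $F_1$ weakly increases when an agent switches from approving both facilities to approving only $F_1$.'' The welfare-approximation inequality in each instance, combined with the chain of strategyproofness constraints, yields a linear system that one shows is infeasible whenever $\rho<6/5$, giving the claimed bound.

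The hard part will be choosing the number and positions of the middle agents together with the exact sequence of preference changes so that the resulting linear system has optimum exactly $6/5$, rather than the weaker bound a naive adaptation of the deterministic argument produces; a closely related subtlety is the careful bookkeeping of the $\varepsilon$-order terms in the strategyproofness inequalities, since, as already visible in the proof of Theorem~\ref{thm:general:lower:rand}, it is sometimes the $\varepsilon$-coefficient constraints rather than the leading-order ones that drive the tight bound in the limit $\varepsilon\to0$.
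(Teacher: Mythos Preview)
Your plan is correct and is essentially the paper's own argument: the paper uses exactly the instance with one agent at $0$ approving only $F_1$, one at $1$ approving only $F_2$, and two agents at each of $1/2\pm\varepsilon$ approving all facilities, invokes symmetry to assume the probability of choosing location $1$ is at least $1/2$, then has the two agents at $1/2-\varepsilon$ drop $F_2$ one by one and uses strategyproofness to conclude that this probability cannot decrease, yielding expected welfare at most $5/2$ against optimum $3$. One small correction to your sketch: the operative strategyproofness constraint (from the direction where the deviating agent's \emph{true} preference is ``approve all'') controls the \emph{location} marginals, not the $F_1$-marginals, since an agent at $1/2-\varepsilon$ who approves every facility cares only about which location is chosen; the $\varepsilon$-coefficient of that inequality is precisely ``probability at location $0$ cannot increase,'' which is all the paper needs.
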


\begin{proof}
We consider the same set of instances as in the proof of Theorem~\ref{thm:locations:lower:det}. We again start from instance $I$, in which there are two candidate locations at $0$ and $1$, an agent at $0$ that approves only facility $F_1$, two agents at $1/2-\varepsilon$ that approve all facilities, two agents at $1/2+\varepsilon$ that approve all facilities, and an agent at $1$ that approves only $F_2$. Without loss of generality, let $p \geq 1/2$ be the probability with which the mechanism places a facility at $0$ when given $I$ as input.

Consider now instance $J$ where, one-by-one, the two agents at $1/2 - \varepsilon$ change their preference so that they approve all facilities but $F_2$. Due to strategyproofness, in each change, the probability with which the mechanism chooses $1$ as the location to place a facility must be $p' \geq p \geq 1/2$; otherwise, the probability of assigning a facility at $0$ would increase and thus the deviating agent (who starts from approving all facilities) would increase its expected utility. In instance $J$, the optimal solution is to place $F_1$ at $0$ with a social welfare of $3$, whereas the social welfare of any other solution is at most $2$. Hence, the expected social welfare of the mechanism is at most $3(1-p')+2p' \leq 5/2$, and the approximation ratio is at least $6/5$. 
\end{proof}

%%%%%%%%%%%%%%%%%%%%%%
%%%%%%%%%%%%%%%%%%%%%%
\section{Conclusion and Open Problems} \label{sec:open}
%%%%%%%%%%%%%%%%%%%%%%
%%%%%%%%%%%%%%%%%%%%%%
In this work, we considered a truthful facility location problem with sufficient funds to only build one out of $k$ available facilities at a location chosen from a set of candidate ones, aiming to (approximately) maximize the social welfare of the agents. For the general setting, where agents may misreport both their positions and their preferences over the facilities, we showed that the approximation ratio of deterministic strategyproof mechanisms is unbounded, whereas that of randomized mechanisms is $k$. For the restricted setting of known positions, we showed that it is possible to achieve a small constant approximation ratio of nearly $2.325$ using a deterministic mechanism, and complemented this result with lower bounds of $3/2$ and $6/5$ for deterministic and randomized mechanisms, respectively. Closing these gaps is the most challenging question that our work leaves open. 

There are multiple interesting generalizations and extensions to explore in the future. 
Our model can be thought of as combining elements of the classic facility location problem and single-winner voting in the sense that we aim to choose a location to build one facility which is chosen from a set of $k$ available options. Consequently, it would make sense to also consider the case of multiwinner voting where $\ell \geq 2$ facilities out of the $k$ available ones can be chosen. For this model, there are several ways in which the individual utility of the agents can be defined, depending on whether they are affected by all their approved facilities that are chosen to be built or just a few of them, similarly to the min-, sum- and max-variants that have been considered in models where all facilities can be built. Taking this a step further, it would then be interesting to generalize the setting even more to the case of participatory budgeting~\citep{Aziz2021}, where we are given a fixed budget, each facility has a particular cost and we can only build facilities with a total cost that satisfies the budget constraint.

%For example, it would make sense to consider the case of obnoxious facilities rather than desirable ones that we have focused in this work. There are multiple ways of defining the utility or cost that an agent derives in this case. One nature way to do so 
%However, one would need to restrict to the knw

\bibliographystyle{plainnat}
\bibliography{references}

\end{document}